\newcolumntype{L}[1]{>{\raggedright\let\newline\\\arraybackslash\hspace{0pt}}m{#1}}
\newcolumntype{C}[1]{>{\centering\let\newline\\\arraybackslash\hspace{0pt}}m{#1}}
\newcolumntype{R}[1]{>{\raggedleft\let\newline\\\arraybackslash\hspace{0pt}}m{#1}}
\newtheorem{thm}{Theorem}
\newtheorem{cor}{Corollary}
\newtheorem{lem}{Lemma}
\newtheorem{prop}{Proposition}
\theoremstyle{definition}
\theoremstyle{remark}
\newtheorem{rem}{Remark}
\DeclareMathOperator*{\rmd}{d}
\newcommand{\var}{{\rm Var}}
\newcommand\blfootnote[1]{%
  \begingroup
  \renewcommand\thefootnote{}\footnote{#1}%
  \addtocounter{footnote}{-1}%
  \endgroup
}
\begin{document}
\title{Dispersion Bound for the Wyner-Ahlswede-K\"orner Network
via Reverse Hypercontractivity on Types}

\author{Jingbo Liu
}

\maketitle
\begin{abstract}
This paper introduces a new converse machinery for a challenging class of distributed source-type problems
(e.g.\ distributed source coding, common randomness generation, or hypothesis testing with communication constraints), through the example of the Wyner-Ahlswede-K\"orner network.
Using the functional-entropic duality and the reverse hypercontractivity of the transposition semigroup,
we lower bound the error probability for each joint type.
Then by averaging the error probability over types,
we lower bound the $c$-dispersion (which characterizes
the second-order behavior of the weighted sum of the rates of the two compressors when a nonvanishing error probability is small) as the variance of the gradient of $\inf_{P_{U|X}}\{cH(Y|U)+I(U;X)\}$
with respect to $Q_{XY}$, the per-letter side information and source distribution.
In comparison, using standard achievability arguments based on the method of types, we upper-bound the $c$-dispersion as the variance of $c\imath_{Y|U}(Y|U)+\imath_{U;X}(U;X)$,
which improves the existing upper bounds
but has a gap to the aforementioned lower bound.
\end{abstract}

\blfootnote{
This paper was presented in part at the 2018 IEEE International Symposium
on Information Theory (ISIT), July 2018.
Jingbo Liu is now a postdoctoral research associate at the MIT Institute for Data, Systems, and Society (IDSS), Cambridge, MA, 02139 ({jingbo}@mit.edu).
The majority of this work was done while he was at Princeton University, Princeton, NJ, 08544.
}

\IEEEpeerreviewmaketitle
\section{Introduction}
\begin{figure}[h!]
  \centering
\begin{tikzpicture}
[node distance=0.6cm,minimum height=10mm,minimum width=12mm,arw/.style={->,>=stealth'}]
  \node[rectangle,draw,rounded corners] (D) {~~~Decoder~~~};
  \node[rectangle,draw,rounded corners] (E1) [left=1.4cm of D] {Encoder 2};
  \node[rectangle,draw,rounded corners] (E2) [below =0.5cm of D] {Encoder 1};
  \node[rectangle] (Y) [left =0.4cm of E1] {$Y^n$};
  \node[rectangle] (X) [left =0.4cm of E2] {$X^n$};
  \node[rectangle] (Yh) [right =0.4cm of D] {$\hat{Y}^n$};

  \draw [arw] (Y) to node[midway,above]{} (E1);
  \draw [arw] (X) to node[midway,right]{} (E2);
  \draw [arw] (E1) to node[midway,above]{$W_2$} (D);
  \draw [arw] (E2) to node[midway,right]{$W_1$} (D);
  \draw [arw] (D) to node[midway,left]{} (Yh);
\end{tikzpicture}
\caption{Source coding with compressed side information}
\label{pump:f_agk}
\end{figure}
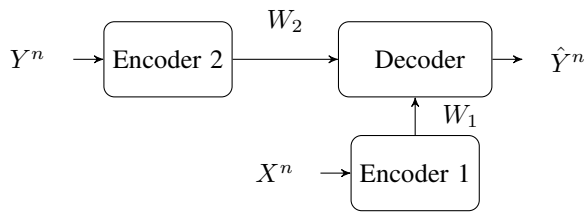
In the Wyner-Ahlswede-K\"orner (WAK) problem \cite{wyner1975side}\cite{ahlswede_bounds_cond1976}, a source $Y^n$ and a side information $X^n$ are compressed separately as integers $W_2$ and $W_1$, respectively,
and a decoder reconstructs $Y^n$ as $\hat{Y}^n$.
Consider the discrete memoryless setting where the per-letter source distribution is $Q_{XY}$,
for any $c>0$, define
\begin{align}
\phi_c(Q_{XY}):=
\inf_{P_{U|X}}\{cH(Y|U)+I(U;X)\}
\label{e_phi}
\end{align}
where $(U,X,Y)\sim P_{U|X}Q_{XY}$.
The following \emph{strong converse} result was proved in \cite{ahlswede_bounds_cond1976} using the \emph{blowing-up lemma}:
if the error probability $\mathbb{P}[\hat{Y}^n\neq Y^n]$ is below some $\epsilon\in (0,1)$, then
\begin{align}
\ln|\mathcal{W}_1|+c\ln|\mathcal{W}_2|\ge
n\phi_c(Q_{XY})-O\left(\sqrt{n}\ln^{3/2}n\right).
\label{e_known}
\end{align}
where the cardinality of the auxiliary can be bounded by $|\mathcal{U}|\le |\mathcal{X}|+2$.
The first-order term in \eqref{e_known} is the precise single-letter characterization \cite{wyner1975side}\cite{ahlswede_bounds_cond1976}.
Note that for any $c<1$, we have $\phi_c(Q_{XY})=cH(Y)$
by the data processing inequality.
Moreover, $\ln|\mathcal{W}_1|+c\ln|\mathcal{W}_2|\ge c\ln |\mathcal{W}_1\times\mathcal{W}_2|\ge cnH(Y)-O(\sqrt{n})$,
which follows simply from a method of type analysis  \cite{csiszar2011information} of the single source compression problem.
Therefore the only nontrivial case is $c\ge 1$.

While recent research has succeeded in studying the \emph{second-order} rates for various single-user and selected multiuser problems (see e.g.\ \cite{polyanskiy2010channel}\cite{tan2014asymptotic}),
it remained a formidable challenge to precisely characterize second-order term in \eqref{e_known}.
Indeed, \cite[Section~9.2.2, 9.2.3]{tan2014asymptotic} listed the second-order rate in WAK as a major open problem, since previous converse techniques (e.g.\ straightforward uses of method of types, information spectrum methods, or meta-converses)
appear insufficient for cases where the auxiliary random variable satisfies a Markov chain.
In fact, the WAK problem represents a typical challenge in a class of distributed source coding problems (or more generally, distributed source-type problems including common randomness generation \cite{ahlswede1998common} or hypothesis testing \cite{ahlswede1986hypothesis}) involving side information (a.k.a.\ a helper).
Recently, Watanabe \cite{watanabe17} examined the converse bound obtained by taking limits in the Gray-Wyner network,
yielding a strong converse for WAK but not appearing to improve the second-order term.
In \cite{tw2018}, Tyagi and Watanabe proposed an approach of dealing with such Markov chain constraints,
by replacing it with a bound on the conditional mutual information and then taking the limits.
While their approach yields strong converses in interesting applications such as Wyner-Ziv and wiretap channels,
it is not clear whether such an approach yields sharper second-order estimates in \eqref{e_known}.

To our knowledge, the first proof of an $O(\sqrt{n})$ second-order converse was due to \cite{ISIT_lhv2017_no_url}.
The $O(\sqrt{n})$ rate is sharp for $\epsilon>1/2$
since an $\Omega(\sqrt{n})$ bound follows by applying the central limit theorem to the standard random coding argument.
In fact, \cite{ISIT_lhv2017_no_url} proposed a new and widely applicable converse technique based on functional-entropic duality and reverse hypercontractivity,
and it appears that the entire blowing-up lemma business for strong converses \cite{ahlswede_bounds_cond1976}\cite{csiszar2011information}
was merely suboptimal approximations of this approach.
After the publication of \cite{ISIT_lhv2017_no_url},
Zhou-Tan-Yu-Motani \cite{zhou} and Oohama \cite{oohama} improved a previous technique of Oohama and claimed that an $O(\sqrt{n})$ converse for WAK also follows from that technique.

The idea of \cite{ISIT_lhv2017_no_url} is roughly described as follows: first we note that an entropic quantity related to $\phi_c$ has an equivalent functional version
\eqref{e_dual} which contains quantities such as $\int\ln f\rmd P$.
If one directly takes $f$ to be the indicator function of a decoding set, then generally $\int\ln f\rmd P=-\infty$ which is useless.
However, using a machinery called reverse hypercontractivity, we design some ``magic operator'' $\Lambda$ such that $\int\ln (\Lambda f)\rmd P\ge \ln \int f\rmd P$, and $\int f\rmd P$ is the probability of correct decoding which we desired.
For all source and channel networks where a strong converse was proved in  \cite{csiszar2011information}, we can now bound the second-order term as $C\sqrt{n\ln\frac{1}{1-\epsilon}}$.

One deficiency of the machinery of \cite{ISIT_lhv2017_no_url} is the inability to deal with the case of $\epsilon<1/2$.
In this paper, we integrate the reverse hypercontractivity machinery with the method of types,
which is capable of showing,
among others,
that for sufficiently small (but independent of $n$) $\epsilon$, the $O\left(\sqrt{n}\ln^{3/2}n\right)$ term in
\eqref{e_known} is $-\Omega(\sqrt{n})$.
More precisely,
under a mild differentiability assumption on $\phi_c(\cdot)$,
we show the following lower bound on the \emph{$c$-dispersion}, defined as the left side of \eqref{e_dis}:
\begin{align}
&\quad\lim_{\epsilon\downarrow 0}\limsup_{n\to\infty}
\frac{[\inf
\{\ln|\mathcal{W}_1|+c\ln|\mathcal{W}_2|\}
-n\phi_c(Q_{XY})]^2}
{2n\ln\frac{1}{\epsilon}}
\nonumber\\
&\ge {\rm Var}\left(\left.\nabla \phi_c\right|_{Q_{XY}}(X,Y)\right)
\label{e_dis1}
\\
&=
{\rm Var}(\mathbb{E}[c\imath_{Y|U}(Y|U)+\imath_{U;X}(U;X)|XY])
\label{e_dis}
\end{align}
where the infimum is over codes for which $\mathbb{P}[\mathcal{E}_n]\le\epsilon$,
$(U,X,Y)\sim Q_{UXY}:=Q_{U|X}Q_{XY}$,
$Q_{U|X}$ is \emph{any} infimizer for \eqref{e_phi},
and we used the notations
\begin{align}
\imath_{Y|U}(y|u)&:=\frac{1}{Q_{Y|U}(y|u)},
\quad\forall (y,u);
\\
\imath_{U;X}(u;x)&:=\frac{Q_{X|U}(x|u)}{Q_X(x)},
\quad\forall (u,x).
\end{align}
We can take $|\mathcal{U}|\le |\mathcal{X}|+2$ \cite{ahlswede_bounds_cond1976}.
We remark that the second-order bound $C\sqrt{n\ln\frac{1}{1-\epsilon}}$ in \cite{ISIT_lhv2017_no_url} does not give a nontrivial bound for the dispersion,
whereas the present bound \eqref{e_dis1} is analogous to the dispersion formula in most other previously solved problems from the network information theory.

Apart from reverse hypercontractivity and functional-entropic duality,
another interesting ingredient in our proof is an argument in analyzing certain information quantity for the equiprobable distribution on a type class (Lemma~\ref{lem_au}).
We perform an algebraic expansion employing the symmetry of the type class, but different from the standard tensorization argument for the i.i.d.\ distribution.
This gives rise to a certain martingale, whose variance equals the gap to the same quantity evaluated for the i.i.d.\ distribution.

On the achievability side, a previously published upper-bound on the $c$-dispersion is
$\left(\sqrt{{\rm Var}(c\imath_{Y|U}(Y|U))}+\sqrt{{\rm Var}(\imath_{U;X}(U;X))}\right)^2$ \cite{watanabe_tan}.
In this paper we use the method of types to show an improved upper bound of
\begin{align}
{\rm Var}\left(
c\imath_{Y|U}(Y|U)+\imath_{U;X}(U;X)\right)
\label{e_ach}
\end{align}
which generally has a gap to \eqref{e_dis}.
Our achievability proof uses standard techniques, so the main methodological contribution of the paper is the converse part.
We remark that a dispersion formula of the type \eqref{e_ach} (in lieu of \eqref{e_dis}) has appeared in noisy lossy source coding \cite{kv2013}.
However, the auxiliary $U$ in WAK assumes the role of a time sharing variable (as opposed to a reconstruction alphabet), so it may seem venturesome to draw an analogy and conjecture \eqref{e_ach} to be the true dispersion.

\emph{Notation.}
Given an alphabet $\mathcal{Y}$, define
$\mathcal{H}_+(\mathcal{Y})$ as the set of all nonnegative functions on $\mathcal{Y}$, and $\mathcal{H}_{[0,1]}(\mathcal{Y})$ the set of functions from $\mathcal{Y}$ to $[0,1]$.
For $f\in\mathcal{H}_+(\mathcal{Y})$, define $P(f):=\mathbb{E}_P[f]$,
and define $P_{Y|X}(f):=\mathbb{E}_{P_{Y|X=\cdot}}[f(Y)]$ as a function on $\mathcal{H}_+(\mathcal{X})$.
Given an $n$-type $P_{XY}$,
let $\mathcal{T}_n(P_X)$ be the set of all $x^n$ with type $P_X$, and $\mathcal{T}_{x^n}(P_{Y|X})$ the set of all $y^n$ such that $(x^n,y^n)$ is type $P_{XY}$.
The total variation distance is denoted by $|P-Q|$.
We use $P_X\to P_{Y|X}\to P_Y$ to define an output distribution $P_X$ for a given input and a random transformation.
Define the Gaussian tail probability ${\rm Q}(t):=\int_t^{\infty}\frac{1}{\sqrt{2\pi}}e^{-\frac{x^2}{2}}\rmd x$, for $t\in \mathbb{R}$.
The bases for all exponentials and entropic quantities are natural.
Unless otherwise stated, the constants used in bounding (e.g.\ $E$,$F$,$G$,$\lambda$) may depend on $c$ and $Q_{XY}$.
Given a probability measure $Q$ on $\mathcal{X}$,
and a functional $\phi$ on the probability simplex $\Delta_{\mathcal{X}}$,
the gradient $\nabla\left.\phi\right|_Q$ is a function on $\mathcal{X}$,
and $\langle \nabla\left.\phi\right|_Q,\,Q\rangle
:=\int\nabla\left.\phi\right|_Q\rmd Q$.

\section{Main Results}\label{sec_main}
\subsection{Converse}
All converse analysis in this paper assumes finite alphabets $\mathcal{X}$ and $\mathcal{Y}$, and that $\phi_c(\cdot)$ (defined in \eqref{e_phi}) has bounded second derivatives in a neighborhood of $Q_{XY}$.
The main ingredient of the converse proof is the following bound in the case where the source sequences are equiprobably distributed on a given type class.
\begin{lem}\label{thm_fc}
Given $Q_{XY}$ and $c\ge1$,
there exists $\lambda\in (0,1)$ and $E>0$ 
such that the following holds:
for any $n$-type $P_{XY}$ such that $|P_{XY}-Q_{XY}|\le\lambda$,
let $(X^n,Y^n)$ be equiprobable on the type class $\mathcal{T}_n(P_{XY})$.
If there exists a WAK coding scheme with error probability $\epsilon\in (0,1)$,
then
\begin{align}
&\quad\ln|\mathcal{W}_1|+c\ln|\mathcal{W}_2|
\nonumber\\
&\ge
n\phi_c(P_{XY})-2c\sqrt{\frac{n}
{\min_xP_X(x)}\ln\frac{1}{1-\epsilon}}
\nonumber\\
&\quad-E\ln n+c\ln(1-\epsilon).
\label{e_main}
\end{align}
\end{lem}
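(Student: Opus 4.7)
I would structure the argument around three ingredients: a functional–entropic (Brascamp–Lieb-type) dual representation of $n\phi_c(P_{XY})$, reverse hypercontractivity for the transposition semigroup acting on the type class, and the trivial bound that for each $w_1\in\mathcal W_1$ the decoder produces at most $|\mathcal W_2|$ distinct output sequences. Concretely, let $A_{w_1}:=g_1^{-1}(w_1)\subseteq\mathcal X^n$ and let $B_{w_1}\subseteq\mathcal Y^n$ be the set of $y^n$ decoded correctly together with some $x^n\in A_{w_1}$; then $|B_{w_1}|\le|\mathcal W_2|$, and writing $N_{w_1}:=|(A_{w_1}\times B_{w_1})\cap\mathcal T_n(P_{XY})|$ one has $(1-\epsilon)\,|\mathcal T_n(P_{XY})|\le\sum_{w_1\in\mathcal W_1}N_{w_1}$. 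Hence the lemma reduces to a per-$w_1$ bound of the rough shape
\[
N_{w_1}\,\le\,|\mathcal T_n(P_{XY})|\,e^{-n\phi_c(P_{XY})}\bigl(|A_{w_1}|/|\mathcal T_n(P_X)|\bigr)\bigl(|B_{w_1}|/|\mathcal T_n(P_Y)|\bigr)^{c}\cdot(\text{correction}),
\]
which, after summing in $w_1$ via $\sum_{w_1}|A_{w_1}|\le|\mathcal T_n(P_X)|$ and $|B_{w_1}|^c\le|\mathcal W_2|^c$, yields \eqref{e_main} once the correction is pinned down as $e^{O(\sqrt n\,\ln^{1/2}(1/(1-\epsilon)))}\cdot n^{O(1)}$.

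The displayed inequality is the statement of a Brascamp–Lieb-type functional inequality associated with the infimizer $P_{U|X}$ of $\phi_c$, now applied to indicators on the type class. The obstruction is that plugging indicators directly into the functional–entropic duality degenerates, because $\int\ln\mathbf 1_A\,dP=-\infty$. I would remedy this by replacing $\mathbf 1_{A_{w_1}}$ and $\mathbf 1_{B_{w_1}}$ by their images under the transposition semigroup $T_t$ acting on $\mathcal T_n(P_X)$ (respectively $\mathcal T_n(P_Y)$), whose invariant measure is the uniform measure on the type class. Reverse hypercontractivity for this semigroup then yields, for appropriate choices of the semigroup time $t$,
\[
\int\ln(T_t\mathbf 1_A)\,dP\;\ge\;\eta\,\ln P(A)-\delta(t),
\]
with $\eta\in(0,1)$ and $\delta(t)$ governed by the log-Sobolev constant of the transposition semigroup on $\mathcal T_n(P_X)$, which is of order $\min_xP_X(x)/n$.

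Substituting the smoothed indicators into the duality inequality, choosing $t$ of order $\sqrt{\ln(1/(1-\epsilon))/(n\min_xP_X(x))}$ so as to balance $\delta(t)$ against the hypercontractive gain, and summing in $w_1$ produces, modulo method-of-types polynomial factors absorbed into $E\ln n$,
\[
c\ln(1-\epsilon)\;\le\;-n\phi_c(P_{XY})+\ln|\mathcal W_1|+c\ln|\mathcal W_2|+2c\sqrt{\tfrac{n}{\min_xP_X(x)}\ln\tfrac{1}{1-\epsilon}}+E\ln n,
\]
which rearranges to \eqref{e_main}. The factor $c$ multiplying $\ln(1-\epsilon)$ is inherited from the exponent $c$ attached to the $Y$-side smoothed indicator, reflecting the asymmetric roles of $X$ and $Y$ in $cH(Y|U)+I(U;X)$.

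The principal obstacle is the reverse-hypercontractive step: obtaining a sharp estimate on the non-product space $\mathcal T_n(P_X)$ with the correct log-Sobolev constant $\Theta(\min_xP_X(x)/n)$ and transferring the $L^p$–$L^q$ statement into the logarithmic form above with exactly the coefficient $2c$. A secondary technicality is uniformity in $P_{XY}$ over the $\lambda$-neighborhood of $Q_{XY}$, which will rest on the assumed boundedness of the second derivatives of $\phi_c(\cdot)$ together with continuity of the semigroup's log-Sobolev constant in $P_X$.
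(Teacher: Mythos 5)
Your high-level ingredients (reverse hypercontractivity of a transposition semigroup on type classes, a functional--entropic dual of $\phi_c$, the bound $|\mathcal B_{w}|\le|\mathcal W_2|$, and the choice $t\asymp\sqrt{\ln\tfrac{1}{1-\epsilon}/n}$) are the right ones, but the reduction you propose does not work as stated. Your per-$w_1$ inequality is linear in $|A_{w_1}|/|\mathcal T_n(P_X)|$, so summing it via $\sum_{w_1}|A_{w_1}|\le|\mathcal T_n(P_X)|$ makes the $\mathcal W_1$-dependence disappear entirely: you would conclude $c\ln|\mathcal W_2|\ge n\phi_c(P_{XY})-O(\sqrt n)$ with no $\ln|\mathcal W_1|$ on the left, which is false whenever $I(U^\star;X)>0$ (it contradicts achievability of the corner point $R_2=H(Y|U^\star)<H(Y|U^\star)+\tfrac{1}{c}I(U^\star;X)$). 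The dual form of $\phi_c$ is not a symmetric two-set counting inequality; it is one-sided, $\ln P_{X^n}\bigl(e^{cP_{Y^n|X^n}(\ln f)}\bigr)\le c\ln P_{Y^n}(f)+d$, so only the $Y$-side indicator needs smoothing, and the $\ln|\mathcal W_1|$ term must be extracted by Jensen's inequality plus a pigeonhole choice of a single $w^\star$ with $\int_{f^{-1}(w^\star)}P_{Y^n|X^n}^{c(1+1/t)}[\mathcal B_{w^\star}|x^n]\,\rmd P_{X^n}\ge\frac{1}{|\mathcal W_1|}(1-\epsilon)^{c(1+1/t)}$, not by summing a product-form bound over $w_1$. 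Relatedly, the semigroup you smooth with must be compatible with the conditional laws $P_{Y^n|X^n=x^n}$: one needs transpositions only within constant-$x^n$ blocks (so that $\mathcal T_{x^n}(P_{Y|X})$ is invariant), followed by a dominating operator independent of $x^n$, which is where $\min_xP_X(x)$ actually enters; the unconditional transposition chain on $\mathcal T_n(P_Y)$ does not have $P_{Y^n|X^n=x^n}$ as its invariant measure.

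The second and larger gap is that you treat the identification of the $n$-letter constant on the type class with $e^{-n\phi_c(P_{XY})}$ as a ``method-of-types polynomial factor.'' It is not. The duality gives $d=\sup_{S_{X^n}}\{cD(S_{Y^n}\|P_{Y^n})-D(S_{X^n}\|P_{X^n})\}$, equivalently $c\ln|\mathcal T_n(P_Y)|-d=\inf_{S_{X^n}}\{cH(S_{Y^n})+D(S_{X^n}\|P_{X^n})\}=:\psi_{c,n}(P_{XY})$, and one must still prove $\psi_{c,n}(P_{XY})\ge n\phi_c(P_{XY})-E\ln n$. Because $P_{X^nY^n}$ is uniform on a type class rather than i.i.d., the standard tensorization/single-letterization fails; the paper's Lemma~\ref{lem_au} establishes this by a chain rule that peels off one coordinate at a time, producing a mean-zero martingale of perturbed types whose accumulated quadratic variation --- controlled by the assumed second-order smoothness of $\phi_c$ near $Q_{XY}$ --- is exactly the $E\ln n$ loss. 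Without an argument of this kind your ``correction'' factor is unjustified, and this step is precisely where the hypotheses $|P_{XY}-Q_{XY}|\le\lambda$ and the differentiability of $\phi_c$ are used.
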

By averaging the error probability bounded in Lemma~\ref{thm_fc} over the types,
we obtain the following converse for stationary memoryless sources:
\begin{thm}\label{cor_dis}
Fix $c\ge1$, $D\in\mathbb{R}$,
and $Q_{XY}$.
Let $(X^n,Y^n)\sim Q_{XY}^{\otimes n}$.
If a WAK coding scheme satisfies
\begin{align}
\ln|\mathcal{W}_1|+c\ln|\mathcal{W}_2|
&\le n\phi_c(Q_{XY})+D\sqrt{n}
\label{e_rate}
\end{align}
for all $n$ then we lower bound the error probability
\begin{align}
\liminf_{n\to\infty}\mathbb{P}[\mathcal{E}_n]
\ge \sup_{\delta\in (0,1)}\delta{\rm Q}\left(\tfrac{D+\sqrt{\frac{2}{\min_x Q_X(x)}\ln\frac{1}{1-\delta}}}
{\sqrt{{\rm Var}(\left.\nabla\phi_c\right|_{Q_{XY}}(X,Y))}}
\right).
\label{e_disp_bound}
\end{align}
In particular, the $c$-dispersion define as the left side of \eqref{e_dis} is lower bounded by ${\rm Var}(\left.\nabla\phi_c\right|_{Q_{XY}}(X,Y))$.
\end{thm}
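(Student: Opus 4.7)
The plan is to condition on the joint type of $(X^n,Y^n)$, apply Lemma~\ref{thm_fc} on each type class, and pass to an averaged statement by a central-limit theorem applied to $\sqrt{n}(\phi_c(\hat P_{X^nY^n})-\phi_c(Q_{XY}))$.

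Under $(X^n,Y^n)\sim Q_{XY}^{\otimes n}$, conditioning on $\hat P_{X^nY^n}=P$ gives the uniform distribution on $\mathcal{T}_n(P)$, so $\mathbb{P}[\mathcal{E}_n]=\mathbb{E}[\epsilon_n(\hat P_{X^nY^n})]$, where $\epsilon_n(P)$ denotes the error probability of the code on a uniform input of type $P$. Combining the rate hypothesis \eqref{e_rate} with Lemma~\ref{thm_fc} and rearranging yields, for every $n$-type $P$ with $|P-Q_{XY}|\le\lambda$, that $\epsilon_n(P)\le\delta$ forces
\begin{align*}
\sqrt{n}(\phi_c(P)-\phi_c(Q_{XY})) \le D+\sqrt{\tfrac{2}{\min_x P_X(x)}\ln\tfrac{1}{1-\delta}}+o(1),
\end{align*}
uniformly in $P$. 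The contrapositive identifies an event $\mathcal{A}_n(\delta)$, depending only on $\hat P_{X^nY^n}$, on which $\epsilon_n(\hat P_{X^nY^n})>\delta$, so $\mathbb{P}[\mathcal{E}_n]\ge\delta\,\mathbb{P}[\mathcal{A}_n(\delta)]$.

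Next, I Taylor-expand $\phi_c$ about $Q_{XY}$ using the bounded second-derivative hypothesis:
\begin{align*}
\phi_c(P)-\phi_c(Q_{XY}) = \langle\nabla\phi_c|_{Q_{XY}},\,P-Q_{XY}\rangle + O(|P-Q_{XY}|^2).
\end{align*}
Since $\mathbb{E}[\nabla\phi_c|_{Q_{XY}}(X,Y)]=\langle\nabla\phi_c|_{Q_{XY}},Q_{XY}\rangle$ under $Q_{XY}$, the linear term times $\sqrt{n}$ is a normalized i.i.d.\ sum that by the CLT converges in distribution to $\mathcal{N}(0,\sigma^2)$, where $\sigma^2:={\rm Var}(\nabla\phi_c|_{Q_{XY}}(X,Y))$. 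Sanov-type concentration keeps $\hat P_{X^nY^n}$ in the $\lambda$-ball around $Q_{XY}$ off an event of exponentially small probability, on which $\sqrt{n}\cdot O(|\hat P_n-Q_{XY}|^2)=o(1)$ in probability and $\min_x\hat P_X(x)\to\min_x Q_X(x)$. Assembling these,
\begin{align*}
\liminf_{n\to\infty}\mathbb{P}[\mathcal{E}_n] \ge \delta\,{\rm Q}\!\left(\frac{D+\sqrt{\tfrac{2}{\min_xQ_X(x)}\ln\tfrac{1}{1-\delta}}}{\sigma}\right),
\end{align*}
for each $\delta\in(0,1)$; taking the supremum over $\delta$ yields \eqref{e_disp_bound}.

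For the dispersion claim, consider a sequence of codes attaining $\mathbb{P}[\mathcal{E}_n]\le\epsilon$ and set $D_\epsilon:=\limsup_n (\ln|\mathcal{W}_1|+c\ln|\mathcal{W}_2|-n\phi_c(Q_{XY}))/\sqrt{n}$. The previous display forces $\epsilon\ge\delta\,{\rm Q}((D_\epsilon+O(1))/\sigma)$ for each fixed $\delta$, and from the asymptotic ${\rm Q}^{-1}(u)\sim\sqrt{2\ln(1/u)}$ as $u\downarrow 0$ one gets $D_\epsilon\ge\sigma\sqrt{2\ln(\delta/\epsilon)}-O(1)$. Dividing by $\sqrt{2\ln(1/\epsilon)}$ and letting $\epsilon\to 0$ with $\delta$ held fixed yields the lower bound $\sigma^2$ on the $c$-dispersion. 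The main obstacle is the uniform control of the $o(1)$ slack in the per-type inequality together with the quadratic Taylor remainder, jointly over types carrying non-negligible probability under $Q_{XY}^{\otimes n}$: this splits into Sanov-style exclusion of atypical types, remainder control on the $\lambda$-ball from the second-derivative hypothesis, and continuity of $P\mapsto\min_x P_X(x)$, and a careful accounting is required so that none of these errors swallows the leading $\sqrt{n}$ term.
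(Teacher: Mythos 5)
Your proposal is correct and follows essentially the same route as the paper's proof: apply Lemma~\ref{thm_fc} per type class, Taylor-expand $\phi_c$ around $Q_{XY}$ so that the linear term becomes a normalized i.i.d.\ sum handled by the CLT, control the quadratic remainder and $\min_x \widehat{P}_X(x)$ via concentration of the empirical type, lower bound $\mathbb{P}[\mathcal{E}_n]$ by $\delta$ times the probability of ``bad'' types, and then invert the Gaussian tail (using ${\rm Q}^{-1}(u)\sim\sqrt{2\ln(1/u)}$, equivalently $\lim_{r\to\infty}\ln {\rm Q}(r)/r^2=-\tfrac12$) to extract the dispersion bound as $\epsilon\downarrow 0$. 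The only cosmetic difference is your bookkeeping of the nuisance constant inside the ${\rm Q}$ function, which is immaterial since it disappears in the $D\to\infty$ (equivalently $\epsilon\downarrow 0$) limit, exactly as in the paper.
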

Proofs of Lemma~\ref{thm_fc} and \eqref{e_disp_bound} are given in Section~\ref{sec_proof}.
After \eqref{e_disp_bound} is established,
using the fact that $\lim_{r\to\infty}\frac{\ln{\rm Q}(r)}{r^2}=-\frac{1}{2}$
we can lower-bound the $c$-dispersion as 
\begin{align}
\sup_{\delta\in (0,1)}\lim_{D\to\infty}
\limsup_{n\to\infty}
\frac{D^2n}{n\cdot\frac{\left(D+
\sqrt{\frac{2}{\min_xQ_X(x)}\ln\frac{1}{1-\delta}}\right)^2}
{{\rm Var}(\left.\nabla\phi_c\right|_{Q_{XY}}(X,Y))}}
={\rm Var}(\left.\nabla\phi_c\right|_{Q_{XY}}(X,Y))
\end{align}
establishing the second claim in Theorem~\ref{cor_dis}.

\subsection{Achievability}
As alluded, we also use the method of types (in lieu of the information spectrum approach of \cite{watanabe_tan}) to obtain the following improved upper bound on $c$-dispersion.
\begin{thm}\label{thm_achieve}
Fix $Q_{XY}$ on finite alphabets, $c\ge 1$, and $D\in \mathbb{R}$.
There exists a WAK scheme scheme for each $n$ such that
\begin{align}
\ln|\mathcal{W}_1|+c\ln|\mathcal{W}_2|
&\le n\phi_c(Q_{XY})+D\sqrt{n};
\\
\limsup_{n\to\infty}\mathbb{P}[\mathcal{E}_n]
&\le
Q\left(\frac{D}{\sqrt{V}}\right),
\end{align}
where we defined $V$ as \eqref{e_ach}.
\end{thm}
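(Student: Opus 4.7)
The plan is to use the standard WAK random coding scheme with binning, analyzed via the method of types. Fix an optimizer $Q_{U|X}$ of \eqref{e_phi}, set $Q_{UXY}:=Q_{U|X}Q_{XY}$, and split the rate budget as $\ln M_1 = nI(U;X)+D_1\sqrt n$ and $c\ln M_2 = cnH(Y|U)+D_2\sqrt n$ with $D_1+D_2=D$ (the split to be chosen later). Draw codewords $U^n(1),\ldots,U^n(M_1)$ i.i.d.\ from $Q_U^{\otimes n}$ and a uniformly random binning $f_2\colon\mathcal{Y}^n\to[M_2]$. Encoder~1 sends the smallest $J$ with $(U^n(J),X^n)$ strongly $\delta_n$-typical under $Q_{UX}$ (an arbitrary $J$ if none exists); encoder~2 sends $f_2(Y^n)$; the decoder returns the unique $\hat Y^n\in f_2^{-1}(W_2)$ such that $(U^n(J),\hat Y^n)$ is strongly $\delta_n$-typical under $Q_{UY}$.

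First I would decompose the error into a covering failure, an atypicality event for $(U^n(J),Y^n)$, and a binning collision. Standard method-of-types counting shows that, conditionally on the empirical joint type $P_{UXY}$ of $(U^n(J),X^n,Y^n)$, the covering failure is double-exponentially small once $\ln M_1 > nI_{P_{UX}}(U;X)+O(\log n)$, the atypicality event is polynomially small in $n$, and the binning collision has probability at most $\exp\bigl(nH_{P_{UY}}(Y|U)-\ln M_2 + O(\log n)\bigr)$. Since the joint type concentrates on $Q_{UXY}$, the scheme succeeds whenever both $I_{P}(U;X)\le \ln M_1/n$ and $cH_{P}(Y|U)\le c\ln M_2/n$ hold up to lower-order slack.

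A direct union bound over the two failure events recovers only the Watanabe--Tan dispersion $\bigl(\sqrt{\mathrm{Var}(\imath_{U;X}(U;X))}+\sqrt{\mathrm{Var}(c\imath_{Y|U}(Y|U))}\bigr)^2$. To attain the improved $V=\mathrm{Var}(c\imath_{Y|U}(Y|U)+\imath_{U;X}(U;X))$, I would next make the rate split adaptive to the realized sequences. Because the number of $n$-types on $\mathcal{X}$ is only polynomial in $n$, encoder~1 can use an $O(\log n)$-bit header---absorbed into the $\sqrt n$ rate slack---to signal an empirically chosen $(D_1,D_2)$ to the decoder. For each empirical type the split is chosen so that the joint success condition collapses to the single combined constraint $I_P(U;X)+cH_P(Y|U)\le \phi_c(Q_{XY})+D/\sqrt n + o(1/\sqrt n)$.

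By the central limit theorem applied to the i.i.d.\ sum $\sum_{i=1}^n[\imath_{U;X}(U_i;X_i)+c\imath_{Y|U}(Y_i|U_i)]$ under $Q_{UXY}^{\otimes n}$, the quantity $\sqrt n[I_{P}(U;X)+cH_{P}(Y|U)-\phi_c(Q_{XY})]$ converges in distribution to a zero-mean Gaussian with variance $V$, yielding
\begin{equation*}
\limsup_{n\to\infty}\mathbb{P}[\mathcal E_n]\le Q(D/\sqrt V).
\end{equation*}
The hard part will be making the adaptive-split step rigorous: since each encoder observes only part of the source, the header (computed by encoder~1 from $X^n$ and $J$) must combine with encoder~2's fixed binning to yield a consistent decoding, and one must verify that the $O(\log n)$ signaling overhead is too small to perturb the $\sqrt n$-level Gaussian asymptotics.
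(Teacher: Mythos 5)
Your scheme up to and including the union-bound analysis is fine, but it only reproduces the Watanabe--Tan bound, as you note; the step you rely on to go beyond it --- an ``adaptive rate split'' $(D_1,D_2)$ signaled by an $O(\log n)$-bit header --- does not work, and this is precisely the difficulty you flag at the end. The message set sizes $|\mathcal{W}_1|$ and $|\mathcal{W}_2|$ are fixed in advance, and Encoder~2 observes only $Y^n$: its binning map into $M_2$ bins is frozen before any header exists, so no information sent by Encoder~1 can change the effective binning rate. The dominant error is the bin collision, whose probability conditioned on the joint type is $\approx\exp\bigl(nH_{\widehat P}(Y|U)-\ln M_2\bigr)$ with $M_2$ fixed; a header about $\widehat P_{X^n}$ gives the decoder no additional information about $Y^n$ beyond $f_2(Y^n)$ and therefore cannot compensate an upward fluctuation of $H_{\widehat P}(Y|U)$ by ``borrowing rate'' from Encoder~1. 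Consequently the two constraints $I_{\widehat P}(U;X)\lesssim \tfrac1n\ln M_1$ and $H_{\widehat P}(Y|U)\lesssim\tfrac1n\ln M_2$ cannot be collapsed into the single combined constraint you invoke, and the claimed CLT with variance $V$ is not justified by the argument as written.

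The missing idea (and the route the paper takes) is to keep \emph{both} rates fixed and instead perturb the auxiliary channel as a function of the observed type: for each $\widehat P_{X^n}$ near $Q_X$ one chooses $P_{U|X}=tP'_{U|X}+(1-t)P^{\star}_{U|X}$ as in \eqref{e_p_ux}, with $t=O(n^{-0.49})$ tuned so that the covering cost $D(P_{U|X}\|P_U^{\star}|\widehat P_{X^n})$ stays at $I(U^{\star};X^{\star})+O(n^{-0.98})$ and is thus met by the fixed $\ln M_1=nI(U^{\star};X^{\star})+n^{0.02}$; Proposition~\ref{prop_prelim} guarantees a perturbation direction along which $I(U;X)$ moves to first order, so such a $t$ exists. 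The type fluctuation of $X^n$ is thereby funneled into the law of the selected codeword, and hence into the conditional mean of $S=\sum_i\imath_{Y|U}(Y_i|U_i)-nH(Y^{\star}|U^{\star})$; the stationarity condition \eqref{e55} (equivalently Proposition~\ref{prop_ind}) makes this conditional mean equal, up to $O(n^{0.02})$, to $\sum_i\mathbb{E}\bigl[\imath_{Y|U}+\tfrac1c\imath_{U;X}\bigm|X=X_i\bigr]$ centered, so that mean and conditional variance combine to the variance proxy $\var\bigl(c\imath_{Y|U}+\imath_{U;X}\bigr)/c^2$. Comparing $\tfrac1{\sqrt n}S$ with the threshold $\tfrac Dc$ coming from $\ln M_2=nH(Y^{\star}|U^{\star})+\tfrac Dc\sqrt n+n^{0.03}$ then yields $\limsup_n\mathbb{P}[\mathcal{E}_n]\le {\rm Q}(D/\sqrt V)$. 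So your one-dimensional CLT heuristic identifies the right answer, but the tradeoff between the two constraints must be realized through the codeword-selection rule (the test channel), not through any rate reallocation between the two distributed encoders.
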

Proof of Theorem~\ref{thm_achieve} is given in Section~\ref{sec_achieve}.
We remark that generally there is a gap between the bounds on $c$-dispersion in Theorem~\ref{cor_dis} and Theorem~\ref{thm_achieve}.

\section{Basic Properties of the Single-Letter Expression}
To better interpret our results and prepare for the proofs,
it is instructive to understand some of the basic properties of the single-letter rate expressions.
To fix ideas,
let us first recall the situation in the simpler and well-studied problem of lossy compression of a single source (see e.g.\ \cite{kostina2012fixed}).
In that problem, we are given a single source with per-letter distribution $Q_X$,
and a per-letter distortion $\rmd\colon \mathcal{U}\times \mathcal{X}\to \mathbb{R}$ on the reconstruction alphabet and the source alphabet.
If $P_{U|X}$ is an optimizer for $\varphi_{\lambda}(Q_X):=\inf_{P_{U|X}}\{I(U;X)+\lambda \mathbb{E}[\rmd(U;X)]\}$,
then the stationarity condition implies that
\begin{align}
\imath_{U;X}(u;x)+\lambda\rmd(u,x)
\end{align}
is 
\begin{enumerate}
\item independent of $u$, $P_{U|X}Q_X$-a.s.;
\item equal to $\nabla \varphi_{\lambda}|_{Q_X}(x)$,
regardless of the choice of the optimal $P_{U|X}$.
It is known (e.g.\ \cite{kostina2012fixed}) that the dispersion equals ${\rm Var}\left(\nabla \varphi_{\lambda}|_{Q_X}(X)\right)$.
\end{enumerate}

Now in WAK, our lower and upper bounds on the dispersion in Theorem~\ref{cor_dis} and Theorem~\ref{thm_achieve}, although different,
are both analogous to the solution in single-user lossy source coding in certain senses.
More precisely, 
we observe Proposition~\ref{prop_ind} and Proposition~\ref{prop_grad} below, 
which are parallel to the two properties listed above for single-user lossy compression.
\begin{prop}\label{prop_ind}
For any $Q_{XY}$ on finite alphabets and $c\ge 1$,
let $P_{U|X}$ be optimal in the definition of $\phi_c(\cdot)$ in \eqref{e_phi}
and suppose that $\mathcal{U}$ is finite\footnote{This is merely a simplifying assumption and is without loss of generality. 
Indeed, Carath\'eodory's theorem implies that one can take $|\mathcal{U}|\le|\mathcal{X}|+2$ \cite{ahlswede_bounds_cond1976}.}.
Then $\mathbb{E}[c\imath_{Y|U}(Y|U)+\imath_{U;X}(U;X)|UX]$
is independent of $U$ almost surely.
\end{prop}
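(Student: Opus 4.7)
The plan is to derive the claim from the first-order (KKT) stationarity condition of the convex program defining $\phi_c$ in \eqref{e_phi}. With $\mathcal{U}$ finite, the feasible set $\{P_{U|X}\}$ is a polytope cut out by the affine equalities $\sum_u P_{U|X}(u|x)=1$ and the nonnegativity constraints, so at an optimizer $P^*_{U|X}$ Lagrange duality yields multipliers $\mu(x)$ such that, for each $x$, the partial derivative $\partial J/\partial P_{U|X}(u|x)$ of the objective $J(P_{U|X}):=cH(Y|U)+I(U;X)$ equals $\mu(x)$ whenever $P^*_{U|X}(u|x)>0$ and is at least $\mu(x)$ otherwise. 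The strategy is to show that $\partial J/\partial P_{U|X}(u|x)$ equals $Q_X(x)$ times precisely the conditional expectation appearing in the statement, so that KKT gives the result directly.

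The key computation is therefore to put those partial derivatives into the right form. Writing $H(Y|U)=H(U,Y)-H(U)$ and $I(U;X)=H(U)-H(U|X)$, and using
\begin{align}
\frac{\partial P_{UY}(u',y)}{\partial P_{U|X}(u|x)}=Q_X(x)\,Q_{Y|X}(y|x)\,\mathbf{1}\{u'=u\},
\end{align}
direct differentiation---in which the constant ``$+1$'' terms produced by $\partial(p\ln p)/\partial p$ cancel in the combination $H(U,Y)-H(U)$---gives
\begin{align}
\frac{\partial H(Y|U)}{\partial P_{U|X}(u|x)}
&= -Q_X(x)\sum_y Q_{Y|X}(y|x)\ln P_{Y|U}(y|u),\\
\frac{\partial I(U;X)}{\partial P_{U|X}(u|x)}
&= Q_X(x)\,\ln\frac{P_{U|X}(u|x)}{P_U(u)}.
\end{align}
Using the Markov chain $U-X-Y$, each right-hand side is $Q_X(x)$ times a conditional expectation given $(U,X)=(u,x)$, so in the notation of the paper,
\begin{align}
\frac{\partial J}{\partial P_{U|X}(u|x)}
= Q_X(x)\,\mathbb{E}\!\left[c\imath_{Y|U}(Y|U)+\imath_{U;X}(U;X)\,\big|\,U=u,X=x\right].
\end{align}

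Substituting into the stationarity condition, for every $x$ with $Q_X(x)>0$ the conditional expectation $\mathbb{E}[c\imath_{Y|U}(Y|U)+\imath_{U;X}(U;X)\mid U=u,X=x]$ equals the common value $\mu(x)/Q_X(x)$ for every $u$ in the support of $P^*_{U|X}(\cdot|x)$. Since $\{(u,x):Q_X(x)>0,\,P^*_{U|X}(u|x)>0\}$ is exactly the support of $P^*_{UX}$, this proves that $\mathbb{E}[c\imath_{Y|U}(Y|U)+\imath_{U;X}(U;X)\mid UX]$ is a function of $X$ alone, $P^*_{UX}$-almost surely, which is the claim. The only mildly delicate step is the derivative bookkeeping; once the envelope-style cancellations are carried out the conclusion from KKT is immediate, and the finite-$\mathcal{U}$ hypothesis (WLOG by Carath\'eodory, as noted in the footnote) is used solely to make the Lagrangian argument elementary.
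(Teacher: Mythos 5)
Your proof is correct and takes essentially the same route as the paper's: both identify the first-order variation of $cH(Y|U)+I(U;X)$ with respect to $P_{U|X}(\cdot|x)$ as $Q_X(x)$ times the conditional expectation $\mathbb{E}[c\imath_{Y|U}(Y|U)+\imath_{U;X}(U;X)\mid U=u,X=x]$, and then invoke first-order stationarity at the optimizer to conclude this quantity is constant in $u$ on the support of $P_{U|X}(\cdot|x)$. Your KKT formulation with multipliers $\mu(x)$ is just a repackaging of the paper's argument that the gradient restricted to each $x$-slice is orthogonal to the zero-sum subspace, with the derivative bookkeeping carried out explicitly rather than asserted.
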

\begin{proof}
Let us introduce the notations
\begin{align}
f(P'_{U|X})&=cH(Y'|U')+I(U';X'),
\\
g(u,x,y)&=c\imath_{Y|U}(y|u)+\imath_{U;X}(u;x),
\end{align}
where $P'_{U|X}$ is any random transformation from $\mathcal{X}$ to $\mathcal{U}$, 
and $(U',X',Y')\sim P'_{U|X}Q_{XY}$.
The first order term in the Taylor expansion of $f(P'_{U|X})-f(P_{U|X})$ equals 
$\sum_{u,x,y}g(u,x,y)(P'_{U|X}-P_{U|X})(u,x)Q_{XY}(x,y)$,
which must vanish by the optimality of $P_{U|X}$.
In particular, fix any $x$ such that $Q_X(x)\neq 0$,
and suppose that $(P'_{U|X}-P_{U|X})(u|x')=0$ for all $u$, unless $x'=x$.
We then have
\begin{align}
\sum_u\mathbb{E}[g(U,X,Y)|U=u,X=x]
(P'_{U|X}-P_{U|X})(u,x)=0.
\end{align}
This shows that $\left(\sum_u\mathbb{E}[g(U,X,Y)|U=u,X=x]\right)_{u\in\mathcal{U}}$
is orthogonal to the subspace of vectors whose coordinates sum to zero,
so itself must be a vector with constant coordinates (depending possibly on $x$ but not $u$).
\end{proof}
We remark that $c\imath_{Y|U}(y|u)+\imath_{U;X}(u;x)$ is generally not independent of $u$. Below is an explicit example.
\\
{\bf Binary symmetric sources:} Suppose that $X$ and $Y$ are both equiprobable on $\{-1,1\}$ and $\mathbb{E}[XY]=\rho$.
Consider any $c\in[\rho^{-2},\infty)$.
Remark that $c<\rho^{-2}$ is the degenerate case since $\rho^2$ is the strong data processing constant.
Let $U$ be equiprobable on $\{-1,1\}$ and such that $U-X-Y$ and $\mathbb{E}[UX]=\eta$,
where $\eta$ is defined as the solution to
\begin{align}
c=\frac{\ln\frac{1+\eta}{1-\eta}}
{\rho\ln\frac{1+\eta\rho}{1-\eta\rho}}.
\label{e_c}
\end{align}
Note that the $\eta$ satisfying \eqref{e_c} maximizes $cH(Y|U)+I(U;X)$ for the given $c$ and $\rho$.
Using Mrs.\ Gerbers lemma (see e.g.\ \cite{el2011network}) one can show that such $P_{U|X}$ is an infimizer for \eqref{e_phi}.
We can compute that
\begin{align}
c\imath_{Y|U}(1|1)+\imath_{U;X}(1;1)
&=
c\ln\frac{2}{1+\rho\eta}+\ln(1+\eta);
\\
c\imath_{Y|U}(1|-1)+\imath_{U;X}(-1;1)
&=
c\ln\frac{2}{1-\rho\eta}+\ln(1-\eta).
\end{align}

\begin{prop}\label{prop_grad}
For any $Q_{XY}$ on finite alphabets and $c\ge 1$,
suppose that $\phi_c(\cdot)$ is differentiable at $Q_{XY}$.
Then for any optimal $P_{U|X}$,
  \begin{align}
  \mathbb{E}[c\imath_{Y|U}(Y|U)+\imath_{U;X}(U;X)|X,Y]
  =\nabla \left.\phi_c\right|_Q(X,Y).
  \label{e14}
  \end{align}
  In particular, the left side does not depend on the choice of the optimal $P_{U|X}$.
\end{prop}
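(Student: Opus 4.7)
My plan is to combine the envelope theorem with a direct perturbation calculation. The key idea is that if $P^*_{U|X}$ attains the infimum in $\phi_c(Q_{XY})$, then the function $Q'\mapsto F(Q',P^*_{U|X}):=cH(Y'|U')+I(U';X')$, with $(U',X',Y')\sim P^*_{U|X}Q'$, satisfies $F(Q',P^*_{U|X})\ge\phi_c(Q')$ for all $Q'$ in the simplex, with equality at $Q'=Q_{XY}$. Under the differentiability assumption on $\phi_c$, this forces the gradient of $Q'\mapsto F(Q',P^*_{U|X})$ at $Q_{XY}$ to agree with $\nabla\phi_c|_Q$ along every direction tangent to the simplex. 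Thus the whole problem reduces to computing a partial gradient of $F$ with $P^*_{U|X}$ held fixed, which is a straightforward exercise.

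I would then perform the direct calculation by perturbing $Q_{XY}\leadsto Q_{XY}+t\Delta_{XY}$ along an arbitrary signed measure $\Delta_{XY}$ with $\sum_{x,y}\Delta_{XY}(x,y)=0$, and expanding to first order in $t$. Writing $\Delta_{UXY}(u,x,y)=P^*_{U|X}(u|x)\Delta_{XY}(x,y)$ and the induced marginals $\Delta_{UX},\Delta_{UY},\Delta_U,\Delta_X$, I would differentiate $H(Y|U)$ and $I(U;X)$ term by term. In each, the ``derivative of the log'' produces quantities of the form $\sum\Delta$ over various marginals, all of which vanish because $\Delta_{XY}$ has zero total mass. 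What survives are the ``derivative of the measure'' terms, giving
\begin{align*}
\tfrac{d}{dt}\bigl|_{t=0}F(Q_{XY}+t\Delta_{XY},P^*_{U|X})
=\sum_{u,x,y}P^*_{U|X}(u|x)\,\Delta_{XY}(x,y)\,\bigl[c\,\imath_{Y|U}(y|u)+\imath_{U;X}(u;x)\bigr].
\end{align*}

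Finally, I would use the Markov chain $U-X-Y$ that is built into $Q_{UXY}=P^*_{U|X}Q_{XY}$: it yields $Q_{U|X,Y}(u|x,y)=P^*_{U|X}(u|x)$, so the inner sum over $u$ is exactly the conditional expectation $\mathbb{E}[c\imath_{Y|U}(Y|U)+\imath_{U;X}(U;X)\mid X=x,Y=y]$. Combining with the envelope step gives
\begin{align*}
\langle\nabla\phi_c|_Q,\Delta_{XY}\rangle
=\sum_{x,y}\Delta_{XY}(x,y)\,\mathbb{E}\bigl[c\imath_{Y|U}(Y|U)+\imath_{U;X}(U;X)\bigm|X=x,Y=y\bigr]
\end{align*}
for every tangent direction $\Delta_{XY}$, which is exactly \eqref{e14} (up to the usual additive constant ambiguity inherent to gradients on the simplex, which is absorbed into the convention for $\nabla\phi_c|_Q$). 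Independence of the optimizer is then automatic, since the left-hand side of \eqref{e14} a priori depends on $P^*_{U|X}$ while the right-hand side does not.

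The only delicate point is the justification of the envelope step: one needs $\phi_c$ to be differentiable at $Q_{XY}$ (assumed) and to check that the nonnegative difference $F(Q',P^*_{U|X})-\phi_c(Q')$, which vanishes at $Q'=Q_{XY}$, is itself differentiable there so that its gradient must be zero. This follows because $F(\cdot,P^*_{U|X})$ is smooth in $Q'$ on the relative interior of the simplex, provided $P^*_{U|X}Q_{XY}$ has full support on the relevant marginals; boundary effects can be handled by restricting attention to admissible perturbations, which is enough to identify the gradient.
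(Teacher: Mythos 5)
Your proposal is correct and follows essentially the same route as the paper: an envelope-theorem step (the derivative of the infimum equals the partial derivative of $cH(Y|U)+I(U;X)$ in $Q_{XY}$ at an optimizer, justified by the assumed differentiability of $\phi_c$), followed by the explicit first-order perturbation computation in which the ``derivative of the log'' terms vanish and the Markov structure $Q_{U|XY}=P^{\star}_{U|X}$ turns the surviving terms into the conditional expectation in \eqref{e14}. You merely spell out the partial-derivative calculation and the boundary/normalization caveats that the paper leaves implicit by citation.
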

\begin{proof}
Recall that $\phi_c(Q_{XY})$ is defined as the infimum of $cH(Y|U)+I(U;X)$ over $P_{U|X}$.
As a general fact, 
the derivative of an infimum equals the partial derivative of the objective function evaluated at an optimizer, 
under suitable differentiability assumption (see e.g.\ the calculation in \cite[Lemma 13]{lccv2015}.
Now for fixed $P_{U|X}$, the partial derivative of $cH(Y|U)+I(U;X)$ with respect to $Q_{XY}$ equals the left side of \eqref{e14}.
\end{proof}

\section{Proof of the Converse}\label{sec_proof}
\subsection{Overview}
\subsubsection{Fixed Composition Argument} 
Imagine that a genie tells all encoders and decoders the joint type of the source, and they all design coding strategies for each type.
This oracle setup gives a converse to the original problem in the stationary memoryless setting.
For the purpose of second-order rate analysis,
the intuition behind many previously solved problems
(including the challenging ones such as the Gray-Wyner network \cite{watanabe2017}) 
may be described as follows.
One roughly sees a dichotomy: for some ``good types'', the error probability essentially equals 1, and for the rest ``bad types'', the error probability is essentially 0.
Thus the total error probability is tightly approximated by the probability of those ``bad types''.

\subsubsection{Challenge of Fixed Composition Argument for WAK}
In network information theory problems where the auxiliary in the single-letter expression satisfies a Markov chain,
a straightforward fixed composition argument as described above does not seem to give even a strong converse (unsurprisingly, since otherwise the authors of \cite{csiszar2011information}\cite{ahlswede_bounds_cond1976}
who are familiar with the method of types would not need the blowing-up lemma for strong converses in \cite{csiszar2011information}).
Moreover, when the blowing-up lemma is applied to a type class,
the second-order term is still $O(\sqrt{n}\ln^{3/2}n)$, no better than the i.i.d.\ case.
As a side remark,
some computations by the author indicate that the above 0-1 dichotomy may not be true when the auxiliary satisfies a Markov chain.

\subsubsection{New Machineries}
In the present paper, we perform fixed composition analysis in the nonvanishing error regime,
and the second-order rate is improved to
$O(\sqrt{n})$.
In lieu of the blowing-up lemma,
we use the dual representation of $\phi_c(\cdot)$ as well as reverse hypercontractivity
--both ingredients integrate naturally and are responsible for the improved rates.
We remark that the $O(\sqrt{n})$ rate is the same order as the i.i.d.\ case in \cite{ISIT_lhv2017_no_url}.
Because it is not $o(\sqrt{n})$, we do not get a clean bound on the second-order term for \emph{each} $\epsilon\in(0,1)$;
we have a bound involving nuisance constants depending on $Q_X$ (see \eqref{e_disp_bound}).
However, the beauty of the \emph{dispersion} \eqref{e_dis} is that the nuisance constant disappears as $\epsilon\to0$.
The technical part of the paper is to show that there exists the ``magic operator'' $\Lambda$ we used in the proof of the converse.
This is done in Section~\ref{app}, where we use
the estimate of the modified log-Sobolev constant in \cite{gao2003}.

\subsection{Proof of Lemma~\ref{thm_fc}}
Suppose that $f\colon \mathcal{X}^n\to \mathcal{W}_1$,
$g\colon \mathcal{Y}^n\to \mathcal{W}_2$ are the encoders, and $V\colon \mathcal{W}_1\times \mathcal{W}_2\mapsto \mathcal{Y}^n$ denotes the decoder.
For each $w\in\mathcal{W}_1$,
define the ``correctly decodable set'':
\begin{align}
\mathcal{B}_w:=\{y^n\colon V(w,g(y^n))=y^n\}.
\end{align}
Let $P_{X^nY^n}$ be the equiprobable distribution on $\mathcal{T}_n(P_{XY})$.
By the assumption,
\begin{align}
\int P_{Y^n|X^n}[\mathcal{B}_{f(x^n)}|x^n]\rmd P_{X^n}(x^n)
\ge 1-\epsilon.
\label{e_assump}
\end{align}
Next, we lower bound the error probability using the functional inequality and reverse hypercontractivity approach.
We introduce a ``magic'' linear operator $\Lambda_{n,t}\colon \mathcal{H}_+(\mathcal{Y})\to\mathcal{H}_+(\mathcal{Y})$,
apply it to the indicator function of a decodable set,
and plug the resulting function into the functional inequality.
To streamline the presentation, we postpone the definition of $\Lambda_{n,t}$ to \eqref{e_lambda}.
The key properties of $\Lambda_{n,t}$,
the proofs of which deferred to Section~\ref{app},
are the following:
for $f\in \mathcal{H}_{[0,1]}(\mathcal{Y}^n)$ and $t=1/\sqrt{n}$,
\begin{itemize}
\item {\bf Lower bound} \eqref{e_et} $P_{Y^n|X^n}(\ln\Lambda_{n,t}f)\ge O(\sqrt{n})\ln P_{Y^n|X^n}(f)$,
\item {\bf Upper bound} \eqref{e_ub}
$P_{Y^n}(\Lambda_{n,t}f)\le \exp(O(\sqrt{n}))P_{Y^n}(f)$.
\end{itemize}
Now, for any $t>0$,
\begin{align}
&\quad(1-\epsilon)^{c\left(1+\frac{1}{t}\right)}
\nonumber\\
&\le \int P_{Y^n|X^n}^{c\left(1+\frac{1}{t}\right)}
[\mathcal{B}_{f(x^n)}|x^n]\rmd P_{X^n}(x^n)
\label{e_jensen}
\\
&= \sum_{w\in\mathcal{W}_1}\int_{x^n\colon f(x^n)=w}
P_{Y^n|X^n}^{c\left(1+\frac{1}{t}\right)}
[\mathcal{B}_w|x^n]\rmd P_{X^n}(x^n)
\\
&\le |\mathcal{W}_1|\int
P_{Y^n|X^n}^{c\left(1+\frac{1}{t}\right)}
[\mathcal{B}_{w^*}|x^n]\rmd P_{X^n}(x^n)
\label{e_different}
\\
&\le  |\mathcal{W}_1|\int
\exp\left(cP_{Y^n|X^n=x^n}(\ln \Lambda_{n,t}1_{\mathcal{B}_{w^*}})\right)
\rmd P_{X^n}
\label{e_use_rhc}
\\
&\le e^d|\mathcal{W}_1|P_{Y^n}^c(\Lambda_{n,t}
1_{\mathcal{B}_{w^*}})
\label{e_duality}
\\
&\le e^d |\mathcal{W}_1|\exp_e\left(\frac{nt}{\min_xP_X(x)}
\right)
P_{Y^n}^c[\mathcal{B}_{w^*}]
\label{e_use_ub}
\\
&\le e^d
|\mathcal{W}_1|\exp_e\left(\frac{nt}{\min_xP_X(x)}
\right)|\mathcal{W}_2|^c\cdot|\mathcal{T}_n(P_Y)|^{-c}.
\label{e_w2}
\end{align}
Here,
\begin{itemize}
  \item \eqref{e_jensen} used Jensen's inequality.
  \item For \eqref{e_different}, we can clearly choose some $w^*\in\mathcal{W}_1$ such that this line holds.
  \item \eqref{e_use_rhc} used the precise form of the lower bound stated above.
        This is the reverse hypercontractivity step.
  \item For \eqref{e_duality}, we defined\footnote{It is interesting to note that the largest $c>1$ for which $d=0$ equals the reciprocal of the strong data processing constant.}
  \begin{align}
  d:=\sup_{S_{X^n}}\{cD(S_{Y^n}\|P_{Y^n})
  -D(S_{X^n}\|P_{X^n})\}
  \end{align}
  where $S_{X^n}\to P_{Y^n|X^n}\to S_{Y^n}$.
  A basic functional-entropic duality result (see e.g.\ \cite{ISIT_lccv_2016}) is that
  \begin{align}
  d=\sup_{f\in\mathcal{H}_+(\mathcal{Y}^n)}
  \left\{
  \ln P_{X^n}(e^{cP_{Y^n|X^n}(\ln f)})
  -c\ln P_{Y^n}(f)
  \right\}
  \label{e_dual}
  \end{align}
  which is the key functional-entropic duality step.
  \item \eqref{e_use_ub} used the precise form of the upper bound stated above.
  \item \eqref{e_w2} used $|\mathcal{B}_{w^*}|\le|\mathcal{W}_2|$.
\end{itemize}
We thus obtain
\begin{align}
&\quad \ln|\mathcal{W}_1|+c\ln|\mathcal{W}_2|
\nonumber\\
&\ge -d+c\ln|\mathcal{T}_n(P_Y)|
\nonumber\\
&\quad-\inf_{t>0}
\left\{\frac{nt}{\min_xP_X(x)}+
c\left(1+\frac{1}{t}\right)\ln\frac{1}{1-\epsilon}
\right\}
\\
&\ge -d-c\ln|\mathcal{T}_n(P_Y)|+c\ln(1-\epsilon)
\nonumber\\
&\quad-2c\sqrt{\frac{n}{\min_xP_X(x)}
\ln\frac{1}{1-\epsilon}}.
\label{e28}
\end{align}
Lemma~\ref{lem_au} bounds $-d-c\ln|\mathcal{T}_n(P_Y)|$, and we are done.
\begin{rem}
From the proof we see that the result continues to hold if the $\mathcal{Y}$-encoder is allowed to access the message of the $\mathcal{X}$-encoder: $g\colon \mathcal{Y}\times \mathcal{W}_1\to \mathcal{W}_2$.
\end{rem}
\begin{rem}
We used Jensen inequality to get \eqref{e_different} from \eqref{e_assump}.
In contrast, \cite{ahlswede_bounds_cond1976} used a reverse Markov inequality, essentially deducing from \eqref{e_assump} that
\begin{align}
P_{X^n}[x^n\colon P_{Y^n|X^n=x^n}
[\mathcal{B}_{w^*}]\ge 1-\epsilon']
\ge \frac{\epsilon'-\epsilon}{\epsilon'|\mathcal{W}_1|}
\label{e_theirs}
\end{align}
which gives rise to a new parameter $\epsilon'$ to be optimized.
It is possible to follow \eqref{e_theirs} with the functional approach as we did in \cite{ISIT_lhv2017_no_url}.
However, proceeding with \eqref{e_different} is more natural and better manifests the simplicity and flexibility of the functional approach \cite{ISIT_lhv2017_no_url}.
\end{rem}

\subsection{Proof Theorem~\ref{cor_dis}}
Let $P_{XY}$ be an arbitrary $n$-type such that $|P_{XY}-Q_{XY}|\le \lambda$ as in Lemma~\ref{thm_fc}.
Then if the error probability conditioned on type $P_{XY}$ is less than $\delta\in (0,1/2)$, we have
\begin{align}
n\phi_c(Q_{XY})+D\sqrt{n}
&\ge
n\phi_c(P_{XY})-2c\sqrt{\tfrac{n}{\min_xP_X(x)}
\ln\tfrac{1}{1-\delta}}
\nonumber\\
&\quad-E\ln n-c\ln2.
\end{align}
Remark that the last two terms will be immaterial for the asymptotic analysis.
Note that by the Taylor expansion, there exists $F>0$ (depending on $Q_{XY}$ and $c$) such that for any $P_{XY}$ in the $\lambda$-neighborhood of $Q_{XY}$,
\begin{align}
\phi_c(P_{XY})&\ge\phi_c(Q_{XY})
+\langle \nabla \left.\phi_c\right|_Q, P_{XY}-Q_{XY}\rangle
\nonumber\\
&\quad -F|P_{XY}-Q_{XY}|^2.
\end{align}
Combining the two bounds above,
the error conditioned on type $P_{XY}$ exceeds $\delta$ if
\begin{align}
&\quad n\langle \nabla \left.\phi_c\right|_Q,
P_{XY}-Q_{XY}\rangle
\nonumber\\
&>
D\sqrt{n}+2c\sqrt{\frac{n}{\min_x P_X(x)}\ln\frac{1}{1-\delta}}
\nonumber\\
&\quad+E\ln n+nF|P_{XY}-Q_{XY}|^2+c\ln2.
\label{e_14}
\end{align}
Now particularize $P_{XY}$ to be the empirical distribution of $(X^n,Y^n)\sim Q_{XY}^{\otimes n}$.
Then with probability $1-O(e^{-n^{1/3}})$ we have $|P_{XY}-Q_{XY}|<n^{-1/3}$ (by Hoeffding's inequality) and $\frac{1}{\min_xP_X(x)}<\frac{2}{\min_x Q_X(x)}$,
and \eqref{e_14} holds if (for some $G>0$)
\begin{align}
&\quad\sum_{i=1}^n
\left.\phi_c\right|_Q(X_i,Y_i)
-\mathbb{E}[\left.\phi_c\right|_Q(X,Y)]
\nonumber\\
&>
D\sqrt{n}
+2c\sqrt{\frac{2n}{\min_x Q_X(x)}\ln\frac{1}{1-\delta}}
+Gn^{1/3}.
\end{align}
Thus by CLT, we conclude that the probability of type with error exceeding $\delta$ is at least
\begin{align}
{\rm Q}\left(
\frac{D+\sqrt{\frac{2}{\min_xP_X(x)}\ln\frac{1}{1-\delta}}}
{\sqrt{{\rm Var}(\nabla \left.\phi_c\right|_{Q}(X,Y))}}
\right)-o(1).
\end{align}
The $c$-dispersion is then lower bounded by
\begin{align}
&\quad\lim_{D\to\infty}\frac{D^2}{2\ln\frac{1}{\delta}
-2\ln {\rm Q}\left(
\frac{D+\sqrt{\frac{2}{\min_xP_X(x)}\ln\frac{1}{1-\delta}}}
{\sqrt{{\rm Var}(\nabla \left.\phi_c\right|_{Q}(X,Y))}}
\right)}
\nonumber\\
&={\rm Var}(\nabla \left.\phi_c\right|_{Q}(X,Y)).
\end{align}

\subsection{Single-letterization on Types}
Given an $n$-type $P_{XY}$,
let $P_{X^nY^n}$ be the equiprobable distribution on $\mathcal{T}_n (P_{XY})$,
and let $P_{Y^n|X^n}$ be the induced random transformation defined for distributions supported on $\mathcal{T}_n(P_X)$.
Let
\begin{align}
\psi_{c,n}(P_{XY}):=
\inf_{S_{X^n}}\{cH(S_{Y^n})+D(S_{X^n}\|P_{X^n})\}.
\label{e_conj}
\end{align}
Here, the infimum is over $S_{X^n}$ supported on $\mathcal{T}_n(P_X)$, and we have set $S_{Y^n}$ by
$S_{X^n}\to P_{Y^n|X^n}\to S_{Y^n}$.
\begin{lem}\label{lem_au}
Given $Q_{XY}$ and $c\ge 1$, there exists $\lambda\in(0,1)$ and $E>0$ such that for any $n$-type $P_{XY}\colon |P_{XY}-Q_{XY}|<\lambda$,
\begin{align}
\psi_{c,n}(P_{XY})
\ge
n\phi_c(P_{XY})-E\ln n.
\label{e_au}
\end{align}
\end{lem}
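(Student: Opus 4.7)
The plan is to work from the dual identity $\psi_{c,n}(P_{XY}) = c\ln|\mathcal{T}_n(P_Y)| - d$ already established in the discussion around \eqref{e_dual}, where $d := \sup_{S_{X^n}}\{cD(S_{Y^n}\|P_{Y^n}) - D(S_{X^n}\|P_{X^n})\}$. Since $\ln|\mathcal{T}_n(P_Y)| = nH(P_Y) - O(\ln n)$ by standard type-counting, the claim reduces to the bound $d \le n[cH(P_Y) - \phi_c(P_{XY})] + O(\ln n)$. Using that $P_{X^n}$ and $P_{Y^n}$ are uniform on their respective type classes, the dual objective simplifies to $\ln|\mathcal{T}_n(P_X)| - H(S_{X^n}) + c\ln|\mathcal{T}_n(P_Y)| - cH(S_{Y^n})$, so it remains to prove $cH(S_{Y^n}) - H(S_{X^n}) \ge n[\phi_c(P_{XY}) - H(P_X)] - O(\ln n)$ for every $S_{X^n}$ on $\mathcal{T}_n(P_X)$.

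For the single-letterization I would introduce a time-sharing index $J \sim \mathrm{Unif}\{1,\dots,n\}$ independent of $(X^n,Y^n)$ and work with $S^\star := S\cdot P_{Y^n|X^n}\cdot\mathrm{Unif}(J)$. Chain rules give $H(S_{Y^n}) = n H_{S^\star}(Y_J|Y^{J-1},J)$ and $H(S_{X^n}) = n H_{S^\star}(X_J|X^{J-1},J)$. A key structural identity, crucial for obtaining the right first-order asymptotics without assuming $S$ permutation-invariant, is that $S^\star_{X_J} = P_X$ exactly, since $\frac{1}{n}\sum_j S_{X_j}(x) = \mathbb{E}_S[\hat P_{X^n}(x)] = P_X(x)$ whenever $S$ is supported on $\mathcal{T}_n(P_X)$. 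Combined with $\ln|\mathcal{T}_n(P_X)| = nH(P_X) - O(\ln n)$, this recasts the objective as
\[
n\bigl[cH_{S^\star}(Y_J|Y^{J-1},J) + I_{S^\star}(X_J;\, X^{J-1},J)\bigr] - O(\ln n).
\]
With auxiliary $U := (X^{J-1}, J)$ and $(X,Y) := (X_J, Y_J)$, the definition of $\phi_c$ would bound this by $n\phi_c(P_{XY}) - O(\ln n)$ provided the Markov chain $U - X - Y$ held under $S^\star$.

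The main obstacle is precisely that this Markov chain fails: since $P_{Y^n|X^n}$ is uniform on the conditional type class $\mathcal{T}_{x^n}(P_{Y|X})$, each coordinate $Y_J$ is entangled with the full vector $X^n$ through the type constraint, and $Y_J$ does not decouple from $X^{J-1}$ given $X_J$. The ``algebraic expansion employing the symmetry of the type class'' promised in the introduction would resolve this by telescoping the deviation between the type-class kernels $P_{X^n}, P_{Y^n|X^n}$ and their i.i.d.\ counterparts $P_X^{\otimes n}, P_{Y|X}^{\otimes n}$ into a martingale under $S^\star$ adapted to the filtration generated by $(X^j,Y^j)_{j\le J}$. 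By the symmetry of the type class and the hypergeometric structure of sampling without replacement, individual increments are $O(1/n)$ and the cumulative variance is controlled by the type-counting estimates $D(P_{X^n}\|P_X^{\otimes n})$ and $\mathbb{E}_S[D(P_{Y^n|X^n}\|P_{Y|X}^{\otimes n})]$, both of order $O(\ln n)$. Absorbing this slack yields the desired single-letter inequality up to an additive $O(\ln n)$ term, completing the proof once the infimum over $S$ is taken.
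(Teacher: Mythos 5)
Your reduction to showing $cH(S_{Y^n})+D(S_{X^n}\|P_{X^n})\ge n\phi_c(P_{XY})-O(\ln n)$ is fine (it is just the definition of $\psi_{c,n}$; the detour through $d$ is unnecessary, and your intermediate display for the dual objective has the sign of the $X$-term flipped, though your final reduced inequality is correct), and the observation $S^\star_{X_J}=P_X$ is indeed the right way to use the support constraint. The genuine gap is the last paragraph: you correctly identify that the chain-rule decomposition leaves you with $cH(Y_J|Y^{J-1},J)+I(X_J;X^{J-1},J)$, where the two terms are conditioned on different histories and the chain $(X^{J-1},J)-X_J-Y_J$ fails under the type-class kernel, but the proposed repair is only a wish. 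Controlling the discrepancy by $D(P_{X^n}\|P_X^{\otimes n})=O(\ln n)$ and $\mathbb{E}_S[D(P_{Y^n|X^n}\|P_{Y|X}^{\otimes n})]=O(\ln n)$ cannot work as stated: KL-closeness of the reference kernels at order $\ln n$ does not control differences of conditional entropies or mutual informations on $\mathcal{Y}^n$ to additive $O(\ln n)$ (the total variation gap can be of order one, and entropy differences can then be of order $n$). Worse, the statement you would be perturbing toward is false without the exchangeable structure: for the i.i.d.\ reference one has $\inf_{S_{X^n}}\{cH(S_{Y^n})+D(S_{X^n}\|P_X^{\otimes n})\}\le n\inf_{S_X}\{cH(S_Y)+D(S_X\|P_X)\}$, which is strictly below $n\phi_c$ in general (take $c=1$, $Y=X$, $S_X$ a point mass), so the gap between the type-class and i.i.d.\ settings is not a perturbative $O(\ln n)$ effect that a generic martingale variance bound can absorb. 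Note also that your sketch never uses the standing assumption that $\phi_c$ has bounded second derivatives near $Q_{XY}$, which the actual argument needs.

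For comparison, the paper avoids the Markov-chain problem altogether by peeling off a single uniformly random coordinate $I$ rather than chain-ruling over all coordinates: the single-letter part satisfies $cH(S_{Y_I|I}|S_I)+D(S_{X_I|I}\|P_X|S_I)\ge\phi_c(P_{XY})$ with auxiliary $U=I$, which is exact because the joint type of $(X^n,Y^n)$ is exactly $P_{XY}$ almost surely, so the marginal of $(X_I,Y_I)$ is exactly $P_{XY}$ and $I-X_I-Y_I$ holds; and, conditioned on $(I,X_I,Y_I)=(i,x,y)$, the residual problem is again exactly a type-class problem, namely $\psi_{c,n-1}(P_{XY}^{xy})$. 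Iterating, the composition performs a zero-mean martingale on the simplex (sampling without replacement), with $\|\Delta_k\|\le 2/(n-k)$, so $\sum_k\mathbb{E}\|\Delta_1+\dots+\Delta_k\|^2=O(\ln n)$, and the second-order Taylor lower bound on $\phi_c$ near $Q_{XY}$ (extended to the whole simplex by boundedness) converts this into the $E\ln n$ loss. So the martingale in the paper lives on the drifting types produced by the recursion, not on a comparison between the type-class kernels and their i.i.d.\ counterparts; if you want to complete your route, you would need an entirely different and currently missing argument for why the exchangeable kernel can be replaced by the product kernel inside $H(S_{Y^n})$ uniformly over all $S_{X^n}$ supported on $\mathcal{T}_n(P_X)$.
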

\begin{proof}
Under the assumption that $\phi_c$ has bounded second derivatives in a neighborhood of $Q_{XY}$,
there exists $\lambda\in (0,1)$ and $E'>0$ large enough such that
\begin{align}
\phi_c(S_{XY})
&\ge
\phi_c(P_{XY})
+\langle\nabla\left.\phi_c\right|_{P_{XY}},S_{XY}-P_{XY}\rangle
\nonumber\\
&\quad-E'\|S_{XY}-P_{XY}\|^2
\label{e_ta}
\end{align}
for any $P_{XY}\colon |P_{XY}-Q_{XY}|\le\lambda$ and any $S_{XY}$ in the probability simplex (the Taylor expansion proves \eqref{e_ta} for $S_{XY}$ in a neighborhood of $P_{XY}$. Then using the boundedness of $\phi_c(\cdot)$ we can extend \eqref{e_ta} to all $S_{XY}$ by choosing $E'$ large enough).
Here $\|\cdot\|$ denotes the $\ell_2$ norm, although any norm admitting an inner product would work.
Consider any $S_{X^n}$ supported on $\mathcal{T}_n(P_X)$,
and put $S_{X^nY^n}=S_{X^n}P_{Y^n|X^n}$.
Let $I$ be equiprobable on $\{1,\dots,n\}$ and independent of $(X^n,Y^n)$ under $S$.
Let $X_{\setminus I}$ denote the coordinates excluding the $I$-th one.
We have
\begin{align}
H(S_{Y^n})&=H(S_{Y_I|I}|S_I)
+H(S_{Y_{\setminus I}|IY_I}|S_{IY_I})
\\
&\ge H(S_{Y_I|I}|S_I)
+H(S_{Y_{\setminus I}|IX_IY_I}|S_{IX_IY_I}),
\label{e_h}
\end{align}
\begin{align}
&\quad D(S_{X^n}\|P_{X^n})
\nonumber\\
&=
D(S_{X_I|I}\|P_{X_I}|S_I)
+D(S_{X_{\setminus I}|IX_I}\|P_{X_{\setminus I}|X_I}|S_{IX_I})
\\
&=D(S_{X_I|I}\|P_{X_I}|S_I)
+D(S_{X_{\setminus I}|IX_IY_I}\|
P_{X_{\setminus I}|X_IY_I}|S_{IX_IY_I})
\label{e_markov}
\end{align}
where \eqref{e_markov} follows from $X_{\setminus I}-IX_I-Y_I$ under $S$.
Noting that $P_{X_I}=P_X$
and $S_{IX_IY_I}=S_{I|X_I}P_{XY}$, we can bound the weighted sum of the first terms in the above two expansions:
\begin{align}
cH(S_{Y_I|I}|S_I)+D(S_{X_I|I}\|P_{X_I}|S_I)
\ge
\phi_c(P_{XY}).
\end{align}
To bound the weighted sum of the second terms in \eqref{e_h} and \eqref{e_markov},
for any $(x,y)$, define
\begin{align}
P_{XY}^{xy}(x',y')
:=\tfrac{1}{n-1}[nP_{XY}(x',y')-1_{(x',y')=(x,y)}],
\quad \forall (x',y').
\end{align}
That is, $P_{XY}^{xy}$ denotes the $(n-1)$-type obtained by removing one pair $(x,y)$ from  sequences of the type $n$-type $P_{XY}$.
Then
\begin{align}
&\quad cH(S_{Y_{\setminus I}|IX_IY_I}|S_{IX_IY_I})
+D(S_{X_{\setminus I}|IX_IY_I}\|
P_{X_{\setminus I}|X_IY_I}|S_{IX_IY_I})
\nonumber\\
&\ge \sum_{x,y,i}\psi_{c,n-1}(P_{XY}^{xy})
S_{IX_IY_I}(i,x,y)
\\
&= \sum_{x,y,i}\psi_{c,n-1}(P_{XY}^{xy})
P_{XY}(x,y).
\end{align}
Summarizing and iterating,
\begin{align}
\psi_{c,n}(P_{XY})
&\ge \phi_c(P_{XY})
+\mathbb{E}[\psi_{c,n-1}(P_{XY}+\Delta_1)]
\\
&\ge \dots
\\
&\ge \sum_{k=0}^{n-1}\mathbb{E}[\phi_c(P_{XY}
+\Delta_1+\dots+\Delta_k)]
\end{align}
where we defined the sequence $\Delta_1,\Delta_2,\dots$ of random vectors in the following way:
conditioned on $\Delta_1,\Delta_k$, denote $S_{XY}:=P_{XY}+\sum_{i=1}^k\Delta_k$,
and then $\Delta_{k+1}:=S_{XY}^{xy}-S_{XY}$ with probability $S_{XY}(x,y)$ for each $(x,y)$.
Using \eqref{e_ta}, and noting that $\Delta_1+\dots+\Delta_k$ is a zero mean martingale, we have
\begin{align}
\psi_{c,n}(P_{XY})
&\ge
n\phi_c(P_{XY})-
E'\sum_{k=1}^{n-1}
\mathbb{E}\|\Delta_1+\dots+\Delta_k\|^2
\nonumber\\
&\ge n\phi_c(P_{XY})-
E'\sum_{k=1}^{n-1}(n-k)\mathbb{E}\|\Delta_k\|^2
\\
&=n\phi_c(P_{XY})-
E'\sum_{k=1}^{n-1}(n-k)\cdot\frac{4}{(n-k)^2}
\label{e_p1}
\\
&=n\phi_c(P_{XY})-
4E'(1+\ln(n-1))
\end{align}
where \eqref{e_p1} follows from the fact that $\|\Delta_k\|\le|\Delta_k|\le\frac{2}{n-k}$ with probability 1.
Taking $E=10E'$ completes the proof.
\end{proof}

\section{Proof of the Achievability}\label{sec_achieve}

We first make a few preliminary observations about the optimization problem in the definition of $\phi_c$:
\begin{prop}\label{prop_prelim}
Let $c>0$, and $Q_{XY}$ be fully supported on the finite set $\mathcal{X}\times\mathcal{Y}$ (that is, $P_{XY}(x,y)>0$ for each $(x,y)$).
Let $P_{U|X}^{\star}$ be an infimizer for \eqref{e_phi}.
Assume without loss of generality that $P_U^{\star}$ is fully supported on some finite set $\mathcal{U}$.
Then
\begin{enumerate}
  \item\label{p1} $P_{U|X=x}^{\star}$ is also fuly supported on $\mathcal{U}$ for each $x$.
  \item\label{p3} as long as $I(U;X)>0$,
  $(U,X)\sim P_{U|X}^{\star}Q_X$, we have
  \begin{align}
  \left.\nabla_{P_{U|X}}I(U;X)\right|_{P_{U|X}^{\star}}\neq 0.
  \label{e56}
  \end{align}
\end{enumerate}
\end{prop}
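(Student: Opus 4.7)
The plan is to establish the two claims via separate perturbation arguments at the optimum $P_{U|X}^{\star}$, each exploiting the full support of $Q_{XY}$ in a specific way.

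For claim (\ref{p1}) I would argue by contradiction. Suppose $P^{\star}_{U|X=x_0}(u_0)=0$ for some $x_0\in\mathcal{X}$, $u_0\in\mathcal{U}$. Since $P_U^{\star}(u_0)>0$, the posterior $P^{\star}_{X|U=u_0}$ is supported on a nonempty subset of $\mathcal{X}$; combined with $Q_{Y|X=x}(y)>0$ for every $(x,y)$ (from the full-support hypothesis on $Q_{XY}$), this yields that $P^{\star}_{Y|U=u_0}$ is itself fully supported on $\mathcal{Y}$. Now pick any $u_1\in\mathcal{U}$ with $P^{\star}_{U|X=x_0}(u_1)>0$ and consider the perturbation $P^{\epsilon}_{U|X=x_0}(u_0)=\epsilon$, $P^{\epsilon}_{U|X=x_0}(u_1)=P^{\star}_{U|X=x_0}(u_1)-\epsilon$, leaving the other slices $P^{\star}_{U|X=x}$ (for $x\neq x_0$) unchanged. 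Along this perturbation $P_U(u)$ stays uniformly bounded away from zero, and $P^{\epsilon}_{Y|U=u}$ stays in the interior of the simplex on $\mathcal{Y}$, so $H(U)$ and $cH(Y|U)$ are smooth in $\epsilon$ and contribute only $O(\epsilon)$ changes; meanwhile $H(U|X=x_0)$ picks up a new $-\epsilon\log\epsilon$ summand, giving $\Delta H(U|X)=-Q_X(x_0)\epsilon\log\epsilon+O(\epsilon)$. Combining,
\begin{equation*}
\Delta\bigl[cH(Y|U)+I(U;X)\bigr]=Q_X(x_0)\,\epsilon\log\epsilon+O(\epsilon)<0
\end{equation*}
for all sufficiently small $\epsilon>0$, contradicting the optimality of $P_{U|X}^{\star}$.

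For claim (\ref{p3}) I would compute the partial derivative of $I(U;X)=H(U)-H(U|X)$ directly, obtaining
\begin{equation*}
\frac{\partial I(U;X)}{\partial P_{U|X}(u|x)}=Q_X(x)\,\log\frac{P_{U|X}(u|x)}{P_U(u)}=Q_X(x)\,\imath_{U;X}(u;x).
\end{equation*}
Interpreted modulo the tangent constraints $\sum_u v(u|x)=0$ of the product simplex, this gradient vanishes at $P_{U|X}^{\star}$ only if $\log[P_{U|X}^{\star}(u|x)/P_U^{\star}(u)]$ is constant in $u$ for every $x$ with $Q_X(x)>0$. The full-support hypothesis on $Q_{XY}$ forces $Q_X(x)>0$ for every $x$, and combining with $\sum_u P_{U|X}^{\star}(u|x)=\sum_u P_U^{\star}(u)=1$ then yields $P_{U|X=x}^{\star}=P_U^{\star}$ for every $x$, i.e., $U\perp X$ and $I(U;X)=0$, contradicting the hypothesis.

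The main subtlety lies in claim (\ref{p1}): whether the perturbation also produces an $-\epsilon\log\epsilon$ term in $\Delta H(Y|U)$ depends on whether it enlarges the support of $P^{\star}_{Y|U=u_0}$, which could in principle create a competing logarithmic singularity with an opposite sign to the one from $H(U|X)$. The full-support hypothesis on $Q_{XY}$ precisely rules this out by ensuring $P^{\star}_{Y|U=u_0}$ is already fully supported on $\mathcal{Y}$, so the logarithmic term from $H(U|X)$ is unopposed and drives the contradiction cleanly. Claim (\ref{p3}) is then essentially a one-line Lagrangian stationarity check.
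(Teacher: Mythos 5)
Your proposal is correct and takes essentially the same route as the paper: claim (\ref{p1}) is the paper's first-order observation that $\frac{\partial}{\partial P_{U|X}(u_0|x_0)}I(U;X)=-\infty$ at a zero entry while the corresponding derivative of $H(Y|U)$ stays finite, which you simply spell out as an explicit $\epsilon\log\epsilon$ perturbation, and claim (\ref{p3}) is the same stationarity argument, in contrapositive form, showing that a gradient constant in $u$ for every $x$ would force $P^{\star}_{U|X=x}=P_U^{\star}$ and hence $I(U;X)=0$. Your explicit check that full support of $Q_{XY}$ makes $P^{\star}_{Y|U=u_0}$ fully supported, so no competing logarithmic term can arise from $H(Y|U)$, usefully fills in a detail that the paper's terse proof leaves implicit.
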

\begin{proof}
The proofs follow from the first-order optimality condition.
For the first claim,
note that if $P_U$ is fully supported and $P_{U|X}(u|x)=0$,
we have
$
\left.\frac{\partial}{\partial P_{U|X}(u|x)}I(U;X)\right|_{P_{U|X}^{\star}}
=-\infty
$
whereas $
\left.\frac{\partial}{\partial P_{U|X}(u|x)}H(Y|U)\right|_{P_{U|X}^{\star}}
$.
This contradicts the optimality of $P_{U|X}^{\star}$.
For the second claim,
observe that $I(U;X)>0$ under $P_{U|X}^{\star}Q_X$ implies the existence of some $x$ such that $\left(\imath_{U;X}(u;x)\right)_{u\in\mathcal{U}}$,
which equals to $\left.\nabla_{P_{U|X=x}}I(U;X)\right|_{P_{U|X}^{\star}}$
up to an additive constant,
cannot be a vector with constant coordinates.
\end{proof}

We next define the encoders and the decoder for each type of $X^n$,
and perform the error analysis.
One challenge in an obvious strategy is that the compression lengths for the two encoders will have to depend on the type $\widehat{P}_{X^n}$, even though their weighted sum does not vary with the type.
A remedy is to perturb the ``encoder'' $P_{U|X}$ according to $\widehat{P}_{X^n}$, so that each individual compression length does not vary with $\widehat{P}_{X^n}$ either.
We will see that \eqref{e56} guarantees that we can find such a perturbation to ``trade off'' the two compression lengths;
the first-order optimality condition
\begin{align}
\left.\nabla_{P_{U|X}}I(U;X)\right|_{P_{U|X}^{\star}}
+c\cdot\left.\nabla_{P_{U|X}}H(Y|U)\right|_{P_{U|X}^{\star}}=0,
\label{e55}
\end{align}
where $(U,X,Y)\sim P_{U|X}Q_{XY}$,
ensures that such ``trade-off'' does not affect the weighted sum.
\\
{\bf Encoder 1:} Define $M_1$ by
\begin{align}
\log M_1=nI(U^{\star};X^{\star})+n^{0.02}
\end{align}
where
\begin{align}
(U^{\star},X^{\star},Y^{\star})\sim P_{U|X}^{\star}Q_{XY}.
\end{align}
Encoder~1 constructs a codebook consisting of $M_1$ codewords i.i.d.\ according to $P_{U}^{\star\otimes n}$.
Upon observing $X^n$, Encoder~1 will send the index of $\widehat{P}_{X^n}$ (using $O(\log n)$ bits) and then send the index of the codeword.
The codeword is selected in the following way:
For each $\widehat{P}_{X^n}$ satisfying
\begin{align}
|\widehat{P}_{X^n}-Q_X|\le n^{-0.49},
\label{e_dev}
\end{align}
we define below a $P_{U|X}$ which is a perturbation of $P_{U|X}^{\star}$.
Then upon observing $X^n$, the encoder selects the first codeword $U^n$ such that $(U^n,X^n)$ has type $P_{U|X}\widehat{P}_{X^n}$ (if any).

To define such a $P_{U|X}$ associated with each $\widehat{P}_{X^n}$,
we can first pick a fixed $P_{U|X}'$ such that
\begin{align}
\left.\frac{\partial}{\partial t}I(Q_X,\,tP_{U|X}'+(1-t)P_{U|X}^{\star})\right|_{t=0}\neq 0.
\end{align}
This is possible in view of Proposition~\ref{prop_prelim}.
Then take
\begin{align}
P_{U|X}=tP_{U|X}'+(1-t)P_{U|X}^{\star},
\label{e_p_ux}
\end{align}
rounded such that $P_{U|X}\widehat{P}_{X^n}$ is $n$-type,
where
\begin{align}
t:=-\frac{\left<\imath_{U^{\star};X^{\star}},\,
P_{U|X}^{\star}(\widehat{P}_{X^n}-Q_X)\right>}{\left.\frac{\partial}{\partial t}I(Q_X,\,tP_{U|X}'+(1-t)P_{U|X}^{\star})\right|_{t=0}}.
\end{align}

{\bf Encoder 2: }Each $y^n$ sequence is mapped randomly to one of $M_2$ bins, where we defined $M_2$ by
\begin{align}
\log M_2=nH(Y^{\star}|U^{\star})+\frac{D}{c}\sqrt{n}+n^{0.03},
\end{align}
and the bin index is sent.

{\bf Decoding rule: }The decoder receives $\widehat{P}_{X^n}$,
and hence knows the previously agreed $P_{U|X}$ as long as \eqref{e_dev} holds.
The decoder selects the $y^n$ sequence in the $m_2$-th bin that minimizes the empirical conditional entropy
$H(\widehat{P}_{y^n|u^n}|\widehat{P}_{u^n})$.

{\bf Error analysis: }
\begin{itemize}
\item Let $\mathcal{E}_0$ be the event that
\begin{align}
|\widehat{P}_{U^nX^nY^n}-P_{U|X}^{\star}Q_{XY}|>n^{-0.49}.
\end{align}
    Using the Bernstein inequality, we can show that with high probability, $|\widehat{P}_{X^nY^n}-Q_{XY}|\le O(n^{-0.49})$, which in turn implies that $\|P_{U|X}-P_{U|X}^{\star}\|\le O(n^{-0.49})$
    (of course, the choice of the norm is immaterial).
    More precisely, we have
\begin{align}
\mathbb{P}[\mathcal{E}_0]=O(e^{-n^{0.001}}).
\end{align}

\item Let $\mathcal{E}_1$ be the event that no codeword is selected by Encoder 1.
    We first argue that $\mathcal{E}_0^c$ guarantees that
\begin{align}
nD(P_{U|X}\|Q_U^{\star}|\widehat{P}_{X^n})+n^{0.01}&\le \log M_1.
\label{e_pux3}
\end{align}
Indeed, $\mathcal{E}_0^c$ implies
\begin{align}
|\widehat{P}_{X^n}-Q_X|\le n^{-0.49},\label{e_pxn}
\end{align}
which in turn implies 
\begin{align}
t=O(n^{-0.49}),
\end{align}
and
\begin{align}
\sup_{x}|P_{U|X=x}-P_{U|X=x}^{\star}|&=O(n^{-0.49}).\label{e_pux2}
\end{align}
Then by the Taylor expansion we have
\begin{align}
&\quad D(P_{U|X}\|P_U^{\star}|\widehat{P}_{X^n})
\nonumber\\
&=I(U^{\star};X^{\star})
+\left<\imath_{U^{\star};X^{\star}},\,
(P_{U|X}-P_{U|X}^{\star})Q_X\right>
\nonumber\\
&\quad+\left<\imath_{U^{\star};X^{\star}},\,
P_{U|X}^{\star}(\widehat{P}_{X^n}-Q_X)\right>
+O(n^{-0.98}).
\label{e_expansion}
\end{align}
The choice of $P_{U|X}$ in \eqref{e_p_ux} ensures that the second and the third terms in \eqref{e_expansion} cancel,
so that \eqref{e_pux3} is fulfilled.

Next, recalled that each codeword is generated according to $P_U^{\star\otimes n}$.
Conditioned on any $\widehat{P}_{X^n}$, a codeword and $X^n$ has the joint type $P_{U|X}\widehat{P}_{X^n}$ with probability $e^{-nD(P_{U|X}\|P_U^{\star}|\widehat{P}_{X^n})-O(\log n)}$.
Hence for any $\widehat{P}_{X^n}$ satisfying \eqref{e_pxn},
using \eqref{e_pux3} we have
\begin{align}
\mathbb{P}[\mathcal{E}_1|\widehat{P}_{X^n}]
&\le
[1-e^{-nD(P_{U|X}\|P_U^{\star}|\widehat{P}_{X^n})-O(\log n)}]^{M_1}
\\
&\le
O(\exp(-e^{0.001n})),
\end{align}
and consequently,
\begin{align}
\mathbb{P}[\mathcal{E}_1\setminus\mathcal{E}_0]
\le
O(\exp(-e^{0.001n})).
\end{align}

\item Let $\mathcal{E}_2$ be the event that there exists some $y'^n\neq Y^n$ such that the conditional entropy for its conditional empirical distribution is smaller:
    \begin{align}
        H(\widehat{P}_{y'^n|U^n}|\widehat{P}_{U^n})
    <H(\widehat{P}_{Y^n|U^n}|\widehat{P}_{U^n}),
    \end{align}
    where $U^n$ denotes the codeword selected by Encoder~1,
    and $y'^n$ and $Y^n$ are assigned to the same bin by Encoder~2.
    Since there are at most $e^{nH(\widehat{P}_{Y^n|U^n}|\widehat{P}_{U^n})}$ sequences in $\mathcal{Y}^n$
    and since each sequence is mapped to the same bin as $Y^n$ with probability $1/M_2$, by the union bound we have
\begin{align}
\mathbb{P}[\mathcal{E}_2\setminus\mathcal{E}_0]
&\le
\mathbb{E}\left[1_{\mathcal{E}_0^c}\wedge
\frac{e^{nH(\widehat{P}_{Y^n|U^n}|\widehat{P}_{U^n})}}{M_2}
\right].
\label{e_exp}
\end{align}
We now simplify \eqref{e_exp} using the Taylor expansion. Under $\mathcal{E}_0^c$, we have
\begin{align}
nH(\widehat{P}_{Y^n|U^n}|\widehat{P}_{U^n})
&=nH(Y^{\star}|U^{\star})
-n\left<\imath_{Y|U},\,\widehat{P}_{U^nY^n}-P_{UY}^{\star}\right>
+O(n^{0.02}).
\\
&=\sum_{i=1}^n \imath_{Y|U}(Y_i|U_i)+O(n^{0.02}).
\end{align}
Therefore,
\begin{align}
\mathbb{P}[\mathcal{E}_2\setminus\mathcal{E}_0]
\le  O(e^{-n^{0.04}})+
\mathbb{P}\left[\mathcal{E}_0^c,\,\sum_{i=1}^n \imath_{Y|U}(Y_i|U_i)-nH(Y^{\star}|U^{\star})-\frac{D}{c}\sqrt{n}>0\right]
\label{e_star}
\end{align}

Define a random variable $S:=\sum_{i=1}^n \imath_{Y|U}(Y_i|U_i)-nH(Y^{\star}|U^{\star})$.
Conditioned on each $\widehat{P}_{X^n}$ satisfying \eqref{e_pxn},
notice that
$(U^n,X^n)$ has the empirical distribution $P_{U|X}\widehat{P}_{X^n}$,
and the conditional distribution of $\frac{1}{\sqrt{n}}\left(S-\mathbb{E}[S|\widehat{P}_{X^n}]\right)$ converges to that of a Gaussian distribution.
Under $\mathcal{E}_0^c$, 
the conditional mean is
\begin{align}
\mathbb{E}\left[S\left|\widehat{P}_{X^n}\right.\right]
&=n\left<\imath_{Y|U},\,P_{U|X}\widehat{P}_{X^n}Q_{Y|X}\right>
-nH(Y^{\star}|U^{\star})
\\
&= n\left<\imath_{Y|U}+\frac{1}{c}\imath_{U;X},\,
P_{U|X}\widehat{P}_{X^n}Q_{Y|X}\right>
-nH(Y^{\star}|U^{\star})-\frac{n}{c}I(U^{\star};X^{\star})
\nonumber\\
&\quad -\frac{n}{c}\left<\imath_{U;X},\, P_{U|X}\widehat{P}_{X^n}Q_{Y|X}\right>+\frac{n}{c}I(U^{\star};X^{\star})
\\
&=n\left<\imath_{Y|U}+\frac{1}{c}\imath_{U;X},\,
P_{U|X}\widehat{P}_{X^n}Q_{Y|X}-P_{U|X}^{\star}Q_{XY}\right>
+O(n^{0.02})
\label{e_36}
\\
&=
n\left<\imath_{Y|U}+\frac{1}{c}\imath_{U;X},\,
P_{U|X}^{\star}(\widehat{P}_{X^n}-Q_X)Q_{Y|X}\right>
+O(n^{0.02})
\label{e_37}
\\
&=\sum_{i=1}^n\mathbb{E}_{P^{\star}_{U|X}Q_{Y|X}}
\left[\left.\imath_{Y|U}+\frac{1}{c}\imath_{U;X}\right|X=X_i\right]
-\frac{1}{c}\phi_c(Q_{XY})+O(n^{0.02}),
\label{e38}
\end{align}
where
\begin{itemize}
\item \eqref{e_36} follows from bounding the last two terms in the previous step:
\begin{align}
&\quad\left<\imath_{U;X},\, P_{U|X}\widehat{P}_{X^n}Q_{Y|X}\right>
-I(U^{\star};X^{\star})
\nonumber\\
&=\left<\imath_{U;X},\,(P_{U|X}-P_{U|X}^{\star})Q_{XY}\right>
+\left<\imath_{U;X},\,P_{U|X}^{\star}(\widehat{P}_{X^n}-Q_X)Q_{Y|X}\right>
+O(n^{-0.98})
\\
&=O(n^{-0.98}),
\end{align}
where we used the Taylor expansion and the definition of $P_{U|X}$ in \eqref{e_p_ux}.
\item To see \eqref{e_37},
we note that 
\begin{align}
\left<\imath_{Y|U}+\frac{1}{c}\imath_{U;X},\,
(P_{U|X}-P_{U|X}^{\star})(\widehat{P}_{X^n}-Q_X)Q_{Y|X}
\right>
= O\left(\|P_{U|X}-P_{U|X}^{\star}\|\cdot|\widehat{P}_{X^n}-Q_X|\right)
=O(n^{-0.98})
\end{align}
and moreover, the first order optimality of $P_{U|X}^{\star}$ implies
\begin{align}
\left<\imath_{Y|U}+\frac{1}{c}\imath_{U;X},\,
(P_{U|X}-P_{U|X}^{\star})Q_{XY}\right>
= O(\|P_{U|X}-P_{U|X}^{\star}\|^2)
=O(n^{-0.98}).
\end{align}
These two inequalities show the step from \eqref{e_36} to \eqref{e_37} upon rearrangements.
\end{itemize}

The conditional variance is bounded as
\begin{align}
\frac{1}{n}\var\left(S\left|\widehat{P}_{X^n}\right.\right)
&=\mathbb{E}_{\widehat{P}_{X^n}P_{U|X}}\left[\var_{Q_{Y|X}}
\left(\imath_{Y|U}|UX\right)\right]
\label{e41}
\\
&=\mathbb{E}_{\widehat{P}_{X^n}P_{U|X}}\left[\var_{Q_{Y|X}}
\left(\left.\imath_{Y|U}+\frac{1}{c}\imath_{U;X}\right|UX\right)
\right]
\\
&=\mathbb{E}_{Q_XP^{\star}_{U|X}}\left[\var_{Q_{Y|X}}
\left(\left.\imath_{Y|U}+\frac{1}{c}\imath_{U;X}\right|UX\right)
\right]
+O(n^{-0.49})
\label{e43}
\\
&=\mathbb{E}_{Q_XP^{\star}_{U|X}}\left[\var_{Q_{Y|X}}
\left(\left.\imath_{Y|U}+\frac{1}{c}\imath_{U;X}\right|X\right)
\right]
+O(n^{-0.49})
\label{e44}
\end{align}
where \eqref{e41} follows since the distribution of $S$ depends only on the empirical distribution $(U^n,X^n)$, which is $P_{U|X}\widehat{P}_{X^n}$,
and \eqref{e44} follows from Proposition~\ref{prop_ind}.

Finally, 
for each type $\widehat{P}_{X^n}$ under $\mathcal{E}_0^c$,
by the Berry-Esse\'en central limit theorem
we see that 
\begin{align}
\mathbb{P}\left[\left.\frac{1}{\sqrt{n}}S
-\mathbb{E}\left[\left.\frac{1}{\sqrt{n}}S\right|
\widehat{P}_{X^n}\right]
>\lambda\right|\widehat{P}_{X^n}
\right]
\le \mathbb{P}[G_n>\lambda]+\xi_n,\quad
\forall \lambda\in \mathbb{R}
\end{align}
where $G_n\sim\mathcal{N}(0,\sigma_n^2)$,
with $\sigma_n^2$ defined as the right side of \eqref{e43},
and $\xi_n$ is some $o(1)$ sequence depending only on $Q_{XY}$, $P_{U|X}^{\star}$, and $c$.
Writing in an equivalent way, we have
\begin{align}
\mathbb{P}\left[\left.\frac{1}{\sqrt{n}}S
>\lambda\right|\widehat{P}_{X^n}
\right]
\le \mathbb{P}\left[\left.G_n+
\mathbb{E}\left[\left.\frac{1}{\sqrt{n}}S\right|
\widehat{P}_{X^n}\right]>\lambda\right|\widehat{P}_{X^n}
\right]+\xi_n,\quad
\forall \lambda\in \mathbb{R}.
\end{align}
The result then follows by unconditioning on $\widehat{P}_{X^n}$.
Note that we essentially bounded the variance proxy of $\frac{1}{\sqrt{n}}S$ as
\begin{align}
&\quad\frac{1}{n}\,\mathbb{E}\left[\var\left(S\left|\widehat{P}_{X^n}
\right.\right)\right]
+\frac{1}{n}\,\var\left(\mathbb{E}\left[S\left|\widehat{P}_{X^n}\right.\right]\right)
\nonumber
\\
&=\mathbb{E}_{Q_XP^{\star}_{U|X}}\left[\var_{Q_{Y|X}}
\left(\left.\imath_{Y|U}+\frac{1}{c}\imath_{U;X}\right|X\right)\right]
\nonumber\\
&\quad+\var\left(\mathbb{E}_{P_{U|X}^{\star}Q_{Y|X}}\left[
\left.\imath_{Y|U}+\frac{1}{c}\imath_{U;X}\right|X\right]\right)
+O(n^{-0.49})
\\
&=\var_{Q_{Y|X}}
\left(\imath_{Y|U}+\frac{1}{c}\imath_{U;X}\right)
+O(n^{-0.49}).
\end{align}

\end{itemize}

\section{Appendix: Reverse Hypercontractivity for the Transposition Model}\label{app}
In this section we construct the ``magic operator'' $\Lambda_{n,t}$ used in Section~\ref{sec_proof} through several stages.

\subsection{The Transposition Model}
Let $\mathcal{S}=\{1,\dots,n\}$.
Consider a reversible Markov chain where
the state space $\Omega$ consists of the $n!$ permutations of the sequence $(1,2,\dots,n)$,
and the  generator is given by
\begin{align}
L_nf:=
\frac{1}{n}\sum_{1\le i,j\le n}(f\sigma_{ij}-f)
\label{e_ln}
\end{align}
for any real-valued function $f$ on $\Omega$,
where $f\sigma_{ij}$ denotes the composition of two mappings, and $\sigma_{ij}$ denotes the transposition operator.
That is, $\sigma_{ij}$ switches the $i$-th and the $j$-th coordinates of a sequence
for any $s^n\in\Omega$,
\begin{align}
(\sigma_{ij}s^n)_k:=
\left\{
\begin{array}{cc}
  s_i & k=j; \\
  s_j & k=i; \\
  s_k & \textrm{otherwise}.
\end{array}
\right.
\label{e_transposition}
\end{align}
As an alternative interpretation of this Markov chain, whenever a Poisson clock of rate $\frac{1}{n}$ clicks,
an index pair $(i,j)\in \{1,\dots,n\}^2$ is randomly selected and the corresponding coordinates are switched.
Remark that the rate at which each coordinate changes its value roughly equals 1,
which is the same as the semi-simple Markov Chain we used in \cite{ISIT_lhv2017_no_url}.
Functional inequalities such as Poincar\'e, log-Sobolev, and modified log-Sobolev for such a Markov chain have been studied to bound its mixing time under various metrics.
In particular, we recall the following upper bound on the modified log-Sobolev constant in \cite{gao2003}, which was proved using a chain-rule and induction argument:
\begin{thm}[\cite{gao2003}]
Let $P$ be the equiprobable distribution on $\Omega$.
For any $n\ge 2$,
\begin{align}
D(S\|P)\le -\mathbb{E}\left[\left(L_n\log\frac{\rmd S}{\rmd P}\right)(X)\right],
\quad \forall S\ll P,
\end{align}
where $X\sim S$.
\end{thm}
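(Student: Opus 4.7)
Proceed by induction on $n$, mirroring the chain-rule / martingale strategy behind Gao--Quastel's treatment of the random-transposition walk.

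\emph{Base case} $(n=2)$: the space $\Omega$ has exactly two elements, and \eqref{e_ln} reduces (after discarding the vanishing $i=j$ contributions) to the generator of a two-point Markov chain with uniform stationary measure. The inequality becomes the classical two-point modified log-Sobolev inequality, which follows from a one-variable calculus check (equivalent to $(a-b)(\log a - \log b)\ge 2(\sqrt{a}-\sqrt{b})^2$).

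\emph{Inductive step}: Write $\rho:=\rmd S/\rmd P$. Distinguish the last coordinate: let $S_n$ be the marginal law of $\omega_n$ under $S$, and let $S^{(k)}$ (resp.\ $P^{(k)}$) denote the conditional laws of $(\omega_1,\dots,\omega_{n-1})$ given $\omega_n=k$. The chain rule for relative entropy yields
\begin{align*}
D(S\|P) \;=\; D(S_n\|P_n) \;+\; \mathbb{E}_{k\sim S_n}\!\bigl[D\bigl(S^{(k)}\bigl\|P^{(k)}\bigr)\bigr].
\end{align*}
Because $P$ is uniform on permutations, each $P^{(k)}$ is, up to the obvious relabeling of the alphabet $\{1,\dots,n\}\setminus\{k\}$, a copy of the transposition model of size $n-1$. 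Apply the inductive hypothesis to every conditional divergence. The key algebraic observation is that whenever $i,j<n$, the transposition $\sigma_{ij}$ fixes $\omega_n$, so on the event $\{\omega_n=k\}$ the increment $\log\rho\circ\sigma_{ij}-\log\rho$ agrees with the corresponding increment of $\log(\rmd S^{(k)}/\rmd P^{(k)})$. Summing over $k$, the inductive upper bound reproduces exactly the portion of $-\mathbb{E}_S[(L_n\log\rho)(X)]$ stemming from the ``in-block'' transpositions.

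It remains to bound the marginal term $D(S_n\|P_n)$ by the ``cross'' part of $-\mathbb{E}_S[(L_n\log\rho)(X)]$ corresponding to transpositions $\sigma_{in}$ with $i<n$. Because $P$ is uniform on permutations, averaging the action of these cross-transpositions at fixed value of $\omega_n$ implements a Bernoulli--Laplace-type exchange of $\omega_n$ with $\omega_I$ for a uniformly chosen $I\in\{1,\dots,n-1\}$; this induces on the marginal law of $\omega_n$ the complete-graph Laplacian on $\{1,\dots,n\}$ with uniform stationary. The required MLSI for this complete-graph walk is classical and in turn can be obtained by tensorizing the two-point inequality of the base case.

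\emph{Main obstacle.} The nontrivial point is the constant bookkeeping: the prefactor $1/n$ in \eqref{e_ln}, the $1/(n-1)$ implicit in $L_{n-1}$, the number $(n-1)^2$ of in-block index pairs, and the count of cross pairs touching coordinate $n$ must conspire so that the sum of the two bounds is exactly $-\mathbb{E}_S[(L_n\log\rho)(X)]$ with constant $1$. Following Gao--Quastel, I would symmetrize the chain-rule decomposition over the choice of the distinguished coordinate before invoking induction; this averaging collapses all combinatorial prefactors into a single clean recursion on the MLSI constant and avoids any ad hoc constant-chasing, closing the induction.
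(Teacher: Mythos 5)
First, note that the paper does not actually prove this theorem: it is imported from \cite{gao2003}, with only the remark that the original proof uses a chain-rule-and-induction argument. Your outline does mirror that reported strategy (condition on a distinguished coordinate, apply the chain rule for relative entropy, induct), your base case $n=2$ is correct, and your observation that for $i,j<n$ the increments of $\log\rho$ coincide with those of $\log(\rmd S^{(k)}/\rmd P^{(k)})$ on $\{\omega_n=k\}$ is also correct, since the two densities differ on that event by the multiplicative constant $P_n(k)/S_n(k)$.

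However, what you defer as ``constant bookkeeping'' is not a nuisance --- it is the entire content of the theorem, and as written your argument does not close. Two concrete gaps. (1) The inductive hypothesis returns the in-block Dirichlet terms with prefactor $\tfrac{1}{n-1}$ (that is the normalization of $L_{n-1}$), while in \eqref{e_ln} they enter with the smaller prefactor $\tfrac{1}{n}$; so each induction step overshoots by a factor $\tfrac{n}{n-1}$, and iterating the naive recursion inflates the MLSI constant by a factor linear in $n$, which would destroy the dimension-free constant that the paper crucially needs (it is what makes the bound \eqref{e_ub} of size $e^{O(nt)}$ with $t=1/\sqrt n$, hence the $O(\sqrt n)$ second-order term). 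Symmetrizing over the distinguished coordinate does improve the in-block accounting (the averaged inductive bound carries the factor $\tfrac{n-2}{n-1}<1$), but it then reduces the theorem to the key inequality $\tfrac1n\sum_m D(S_m\|P_m)\le \tfrac{1}{n(n-1)}\sum_{i\neq j}\mathcal{E}_{ij}$, where $\mathcal{E}_{ij}$ is the per-transposition entropy-Dirichlet term; you neither state nor prove this inequality, you only assert that the prefactors ``collapse.'' That inequality is precisely the hard, Bernoulli--Laplace-type estimate at the heart of \cite{gao2003}. (2) Your proposed route to it is also flawed in two places: the dynamics induced on the marginal of $\omega_n$ by the cross transpositions is governed by the conditional law of $\omega_i$ given $\omega_n$ under $S$, not under $P$, so it is not the uniform complete-graph walk --- passing from the full cross Dirichlet form to a marginal Dirichlet form requires a data-processing/joint-convexity argument for $(a-b)(\log a-\log b)$ that you do not supply; and the MLSI for the complete-graph walk cannot be obtained by ``tensorizing the two-point inequality'' (tensorization yields product chains, i.e.\ the hypercube, not $K_n$; the complete-graph MLSI needs a separate argument, e.g.\ a direct Jensen-type computation for the uniform-resampling chain). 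Until the symmetrized marginal-versus-cross inequality is stated and proved with its exact constant, the proposal is a plausible plan rather than a proof of the cited theorem.
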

It is known (e.g.\ \cite[Theorem 1.11]{mossel2013reverse}) that a modified log-Sobolev inequality is equivalent to a reverse hypercontractivity of the corresponding Markov semigroup operator
$e^{L_nt}:=\sum_{k=0}^{\infty}\frac{t^k}{k!}L_n^k$.
We thus have
\begin{cor}
In the transposition model,
For any $q<p<1$, $t\ge\ln\frac{1-q}{1-p}$,
and $f\in\mathcal{H}_+(\Omega)$,
\begin{align}
\|e^{L_nt}f\|_{L^q(\Omega)}\ge \|f\|_{L^p(\Omega)}.
\label{e_rhc}
\end{align}
\end{cor}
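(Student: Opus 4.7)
The plan is to deduce the reverse hypercontractive inequality from the modified log-Sobolev inequality (MLSI) of the preceding theorem via the standard ``differentiation along a moving $L^{q(t)}$-norm'' argument that underlies the equivalence between MLSI and reverse hypercontractivity (cf.\ \cite[Theorem~1.11]{mossel2013reverse}). Fix $f \in \mathcal{H}_+(\Omega)$ and, by replacing $f$ with $f+\varepsilon$ and letting $\varepsilon\downarrow 0$ at the end, assume $f$ is bounded away from $0$ so that $P(f^r)$ is finite and positive for every $r\in\mathbb{R}$. Write $g_t := e^{L_n t}f$, so that $\partial_t g_t = L_n g_t$.

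Introduce the moving exponent $q(t) := 1-(1-p)e^{t}$, which satisfies $q(0)=p$, $\dot q(t) = q(t)-1$, and $q(t_0)=q$ at the critical time $t_0 := \ln\tfrac{1-q}{1-p}$. Consider
\[
F(t) \;:=\; \ln \|g_t\|_{L^{q(t)}(P)} \;=\; \frac{1}{q(t)}\,\ln P\bigl(g_t^{q(t)}\bigr).
\]
Differentiating produces two contributions: one from $\dot q(t)$ yielding an entropy-like term involving $P(g_t^{q(t)}\ln g_t^{q(t)})$, and one from $\partial_t g_t = L_n g_t$ yielding a Dirichlet-form-like term $q(t)\,P(g_t^{q(t)-1} L_n g_t)$, which can be rewritten (by the pointwise chain-rule identity used in every proof of this equivalence) in terms of $P(g_t^{q(t)} L_n\ln g_t^{q(t)})$. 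With the specific calibration $\dot q = q-1$, after algebraic rearrangement the sign of $\dot F(t)$ is controlled precisely by the MLSI applied to the probability density $g_t^{q(t)}/P(g_t^{q(t)})$, and the preceding theorem therefore gives $\dot F(t)\ge 0$.

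Integrating $\dot F\ge 0$ from $0$ to $t_0$ yields $F(t_0)\ge F(0)$, i.e.\ $\|e^{L_n t_0}f\|_{L^{q}(P)}\ge\|f\|_{L^p(P)}$, which is the inequality at the critical time. To extend to all $t\ge t_0$, I would use the semigroup property $e^{L_n t} = e^{L_n(t-t_0)}\circ e^{L_n t_0}$ together with the observation that, for $q<1$, the map $s\mapsto\|e^{L_n s}h\|_{L^q(P)}$ is nondecreasing: this is a reverse-Jensen consequence of the facts that $e^{L_n s}$ is a $P$-stationary Markov averaging operator and that $r\mapsto r^q$ is concave on $(0,\infty)$ when $q<1$.

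The main obstacle is the care needed once $q(t)<0$, which happens as soon as $t>\ln\tfrac{1}{1-p}$. At that point $\|\cdot\|_{L^{q(t)}}$ ceases to be a norm and $P(g_t^{q(t)})$ can diverge if $g_t$ is not bounded away from zero; this is exactly what forces the $f\leftarrow f+\varepsilon$ approximation above, together with a uniform-in-$\varepsilon$ justification of differentiation under the integral and of the limit $\varepsilon\downarrow 0$. A secondary subtlety is that the MLSI in the theorem is stated for probability densities $S\ll P$, so one must renormalize $g_t^{q(t)}$ to a probability density before applying it and verify that the normalization constant cancels on the Dirichlet side as well, which is again a routine chain-rule verification.
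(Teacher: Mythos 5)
Your proposal is correct and takes essentially the same route as the paper: the paper obtains \eqref{e_rhc} from the modified log-Sobolev inequality of \cite{gao2003} by invoking the known equivalence in \cite[Theorem~1.11]{mossel2013reverse}, and your moving-exponent argument with $q(t)=1-(1-p)e^{t}$, $\dot q=q-1$ (giving $t_0=\ln\frac{1-q}{1-p}$) is precisely the standard proof of that cited implication, with the integrability/differentiation worries being moot here since $\Omega$ is finite. One cosmetic point: in the extension to $t>t_0$, the justification that ``$r\mapsto r^{q}$ is concave for $q<1$'' is false for $q<0$ (it is convex there), although the monotonicity of $s\mapsto\|e^{L_n s}h\|_{L^{q}(\Omega)}$ you need still holds because the outer $1/q$-power (with $1/q<0$) reverses the Jensen inequality back; alternatively one can avoid this step entirely by running the flow to exponent $q(t)\le q$ and using that $r\mapsto\|g\|_{L^{r}(\Omega)}$ is nondecreasing in $r$.
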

We remark that the norms in \eqref{e_rhc} are with respect to the equiprobable measure $P$.
By taking the limits, we have
\begin{align}
\|f\|_{L^0(\Omega)}=\exp\left(P(\ln f)\right).
\end{align}

\subsection{Reverse Hypercontractivity on Types}\label{sec_type}
Now consider any finite $\mathcal{Y}$
and a Markov chain with state space $\mathcal{Y}^n$.
With a slight abuse of notation, let $L_n$ also denote the generator of this new Markov chain.
Let $P_Y$ be an $n$-type.
Note that $\mathcal{T}_n(P_Y)$ is invariant under transposition and hence also an invariant subspace for the chain.
We now prove a reverse hypercontractivity for the Markov semigroup operator for this new chain.
Pick any map $\phi\colon \mathcal{S}\to \mathcal{Y}$ such that $|\phi^{-1}(y)|=nP_Y(y)$ for each $y$.
Then the extension $\phi^n$ defines a function $\Omega\to \mathcal{T}_n(P_Y)$.
Now for any $f\in \mathcal{H}_+(\mathcal{Y}^n)$, from \eqref{e_rhc} we have
\begin{align}
\|e^{L_nt}(f\phi^n)\|_{L^q(\Omega)}\ge \|f\phi^n\|_{L^p(\Omega)}.
\label{e_composition}
\end{align}
We claim that \eqref{e_composition} is equivalent to
\begin{align}
\|e^{L_nt}f\|_{L^q(\mathcal{T}_n(P_Y))}\ge \|f\|_{L^p(\mathcal{T}_n(P_Y))}.
\label{e96}
\end{align}
Indeed, $\|f\phi^n\|_{L^p(\Omega)}
=\mathbb{E}^{\frac{1}{p}}\left[(f\phi^n(S^n))^p\right]
=\mathbb{E}^{\frac{1}{p}}[f^p(Y^n)]=\|f\|_{L^p(\mathcal{T}_n(P_Y))}$.
Here, $L^p(\mathcal{T}_n(P_Y))$ is with respect to the equiprobable measure on $\mathcal{T}_n(P_Y)$,
and so the value of $f$ on $\mathcal{Y}^n\setminus \mathcal{T}_n(P_Y)$ is immaterial.
Moreover, from the definitions we can see that $\phi^n$ commutes with transposition,
so $\left(e^{L_nt}(f\phi)\right)(s^n)=(e^{L_nt}f)(\phi^n(s^n))$
for any $s^n\in\Omega$,
and the left sides of \eqref{e_composition} and \eqref{e96} are therefore also equal by the same argument.

We remark that for $P_Y$ not concentrated on a $y\in\mathcal{Y}$ and as $n\to\infty$,
we don't lose too much tightness in the composition step argument,
and the estimate in \eqref{e96} is sharp.
That is, the modified log-Sobolev constant is indeed of the constant order;
the lower bound can be seen by taking linear functions in the corresponding Poincar\'e inequality, which is weaker than the modified log-Sobolev inequality.

\subsection{Conditional Types: the Tensorization Argument}
Let $\mathcal{X}$ and $\mathcal{Y}$ both be finite sets.
For any $x^n\in\mathcal{X}^n$,
define a linear operator $L_{x^n}\colon \mathcal{H}_+(\mathcal{Y}^n)\to \mathcal{H}_+(\mathcal{Y}^n)$ by
\begin{align}
L_{x^n}f:=\sum_{x\in\mathcal{X}}
\frac{1}{n\widehat{P}_{x^n}(x)}\sum_{i,j\colon x_i=x_j=x}(f\sigma_{ij}-f).
\label{e_ln_cond}
\end{align}
where we recall that $\widehat{P}_{x^n}$ denotes the empirical distribution of $x^n$.
Note that
$L_{x^n}$ is the generator of the Markov chain where
independently for each $x\in\mathcal{X}$,
the length $nP_X(x)$ subsequence of $\mathcal{Y}^n$ with indices $\{i\colon x_i=x\}$ is the transposition model in Section~\ref{sec_type}.
Since $L_{x^n}$ is the sum of $|\mathcal{X}|$ generators for transposition models,
the Markov semigroup operator $e^{L_{x^n}t}$
is a tensor product,
which satisfies the reverse hypercontractivity with the same constant,
by the tensorization property (see e.g.\ \cite{mossel2013reverse}).
Therefore for any $n$-type $P_{XY}$, $x^n\in \mathcal{T}_n(P_X)$,
and $f\colon \mathcal{H}_+(\mathcal{Y}^n)\to \mathcal{H}_+(\mathcal{Y}^n)$,
\begin{align}
\|e^{L_{x^n}t}f\|_{L^q(\mathcal{T}_{x^n}(P_{Y|X}))}\ge \|f\|_{L^p(\mathcal{T}_{x^n}(P_{Y|X}))}.
\label{e97}
\end{align}

\subsection{A Dominating Operator}
The operator in \eqref{e97} depends on $x^n$ and hence cannot be used directly in the proof of Lemma~\ref{thm_fc}.
We now find an upper bound which is independent of $x^n$.
Define a linear operator $\tilde{L}_n\colon\mathcal{H}_+(\mathcal{Y}^n)\to
\mathcal{H}_+(\mathcal{Y}^n)$ by
\begin{align}
\tilde{L}_nf(y^n)
=\frac{1}{n\min_x P_X(x)}
\sum_{1\le i,j\le n} f(\sigma_{ij}y^n).
\end{align}
Note that the summation includes the $i=j$ case, where $\sigma_{ij}$ becomes the identity.
From the general formula
$
\frac{\rmd}{\rmd t}(e^{Lt}f)
=Le^{Lt}f
$
we can see a comparison property: since the matrix of $\tilde{L}_n$ entry-wisely dominate $L_{x^n}$, we have
$
e^{\tilde{L}_nt}f\ge e^{L_{x^n}t}f
$
pointwise for any $t\ge0$ and $f\in\mathcal{H}_+(\mathcal{Y}^n)$.
Now consider
\begin{align}
\Lambda_{n,t}:=e^{\tilde{L}_nt},\quad\forall t>0
\label{e_lambda}
\end{align}
which forms an operator semigroup (although not associated with a conditional expectation).
Now $\Lambda_{n,t}$ is the operator we used in the proof of
Lemma~\ref{thm_fc},
and the two key properties we used are:

{\bf Lower bound:}
for any $f\colon \mathcal{Y}^n\to [0,1]$,
\begin{align}
&\quad\exp(P_{Y^n|X^n=x^n}(\ln\Lambda_{n,t}f))
\nonumber\\
&=\|\Lambda_{n,t}f\|_{L^0(\mathcal{T}_{x^n}(P_{Y|X}))}
\nonumber\\
&\ge \|e^{L_{x^n}t}f\|_{L^0(\mathcal{T}_{x^n}(P_{Y|X}))}
\label{e_56}\\
&\ge \|f\|_{L^{1-e^{-t}}(\mathcal{T}_{x^n}(P_{Y|X}))}
\nonumber\\
&\ge P_{Y^n|X^n=x^n}^{\frac{1}{1-e^{-t}}}(f)
\nonumber\\
&\ge P_{Y^n|X^n=x^n}^{1+\frac{1}{t}}(f)
\label{e_et}
\end{align}
where \eqref{e_et} follows from $e^t\ge 1+t$.

{\bf Upper bound (in fact, equality):}
For any $f\colon \mathcal{Y}^n\to [0,\infty)$,
\begin{align}
&\quad\frac{\rmd}{\rmd t}P_{Y^n}(\Lambda_{n,t}f)
\nonumber\\
&=P_{Y^n}(\tilde{L}_n\Lambda_{n,t}f)
\\
&=\tfrac{1}{n\min_xP_X(x)}\sum_{y^n\in \mathcal{T}_n(P_Y)}
P_{Y^n}(y^n)\sum_{i,j}(\Lambda_{n,t}f)(\sigma_{ij}y^n)
\\
&=\tfrac{n}{\min_xP_X(x)}\sum_{z^n\in \mathcal{T}_n(P_Y)}
P_{Y^n}(z^n)(\Lambda_{n,t}f)(z^n)
\label{e_equi}
\\
&=\tfrac{n}{\min_xP_X(x)}\,P_{Y^n}(\Lambda_{n,t}f)
\end{align}
where \eqref{e_equi} used the fact that $P_{Y^n}$ is the equiprobable distribution on $\mathcal{T}_n(P_Y)$. Thus
\begin{align}
P_{Y^n}(\Lambda_{n,t}f)
&=\exp_e\left(\frac{nt}{\min_x P_X(x)}\right) P_{Y^n}(f).
\label{e_ub}
\end{align}

\section{Discussion}
Through the example of the Wyner-Ahlswede-K\"orner (WAK) network, 
we supplied the mathematical ingredients needed for extending this new converse approach to other potential applications.
There are several distributed source type problems which are very similar to the WAK problem.
For example using a tensor product semigroup for the stationary memoryless settings,
\cite{liu_thesis} 
proved an $O(\sqrt{n})$ second-order converse for common random generation with one-way rate limited communications.
It appears straightforward to upgrade to dispersion bounds and obtain similar results as WAK, 
by following the same steps therein but using the techniques of the present paper.


\section{Acknowledgement}
The author would like to thank Prof.\ Ramon van Handel 
and Prof.\ Sergio Verd\'u for their generous support and guidance on research along this line.
Helpful discussions with Prof.\ Ramon van Handel,
Prof.\ Shun Watanabe, Prof.\ Victoria Kostina are greatly appreciated.
This work was supported by NSF grants
CCF-1350595,
CCF-1016625,
CCF-0939370,
and DMS-1148711,
by ARO Grants W911NF-15-1-0479
and W911NF-14-1-0094,
and AFOSR FA9550-15-1-0180,
and by the Center for Science of Information.

\bibliographystyle{ieeetr}
\bibliography{ref_maximal}

\begin{thebibliography}{10}

\bibitem{wyner1975side}
A.~D. Wyner, ``On source coding with side information at the decoder,'' {\em
  IEEE Trans. Inf. Theory}, vol.~21, pp.~294--300, May 1975.

\bibitem{ahlswede_bounds_cond1976}
R.~Ahlswede, P.~G\'acs, and J.~K\"orner, ``Bounds on conditional probabilities
  with applications in multi-user communication,'' {\em Z.
  Wahrscheinlichkeitstheorie verw. Gebiete}, 1976.

\bibitem{csiszar2011information}
I.~Csisz\'ar and J.~K{\"o}rner, {\em Information Theory: Coding Theorems for
  Discrete Memoryless Systems}.
\newblock Cambridge University Press, 2011.

\bibitem{polyanskiy2010channel}
Y.~Polyanskiy, H.~V. Poor, and S.~Verd{\'u}, ``Channel coding rate in the
  finite blocklength regime,'' {\em IEEE Trans.~Inf.~Theory}, vol.~56, no.~5,
  pp.~2307--2359, May 2010.

\bibitem{tan2014asymptotic}
V.~Y. Tan, ``Asymptotic estimates in information theory with non-vanishing
  error probabilities,'' {\em Foundations and Trends in Communications and
  Information Theory}, vol.~11, no.~1-2, pp.~1--184, 2014.

\bibitem{ahlswede1998common}
R.~Ahlswede and I.~Csisz\'{a}r, ``{Common randomness in information theory and
  cryptography. {II}. {CR} capacity},'' {\em IEEE Trans. Inf. Theory}, vol.~44,
  no.~1, pp.~225--240, Jan.~1998.

\bibitem{ahlswede1986hypothesis}
R.~Ahlswede and I.~Csisz\'ar, ``Hypothesis testing with communication
  constraints,'' {\em IEEE Trans. Inf. Theory}, vol.~32, no.~4, pp.~533--542,
  July 1986.

\bibitem{watanabe17}
S.~Watanabe, ``A converse bound on {Wyner-Ahlswede-K\"orner} network via
  {Gray-Wyner} network,'' {\em arXiv:1704.02262}.

\bibitem{tw2018}
H.~Tyagi and S.~Watanabe, ``Strong converse using change of measure
  arguments,'' in {\em Proc. of IEEE Intl.\ Symposium on Inf.\ Theory},
  pp.~1849--1853, July 2018.

\bibitem{ISIT_lhv2017_no_url}
J.~Liu, R.~van Handel, and S.~Verd\'u, ``Beyond the blowing-up lemma: Sharp
  converses via reverse hypercontractivity,'' in {\em Proc. of IEEE Intl.\
  Symposium on Inf.\ Theory}, pp.~943--947, June 2017.

\bibitem{zhou}
L.~Zhou, V.~Y.~F. Tan, L.~Yu, and M.~Motani, ``Exponential strong converse for
  content identification with lossy recovery,'' {\em arXiv:1702.06649v2}, Dec
  2017.

\bibitem{oohama}
Y.~Oohama, ``Exponent function for one helper source coding problem at rates
  outside the rate region,'' {\em arXiv:1504.05891v5}, Jan 2018.

\bibitem{watanabe_tan}
S.~Watanabe, S.~Kuzuoka, and V.~Y.~F. Tan, ``Nonasymptotic and second-order
  achievability bounds for coding with side-information,'' {\em IEEE Trans.\
  Inf.\ Theory}, vol.~61, Apr.~2015.

\bibitem{kv2013}
V.~Kostina and S.~Verd\'u, ``Nonasymptotic noisy lossy source coding,'' {\em
  IEEE Trans. Inf. Theory}, vol.~62, pp.~6111--6123, Nov.~2016.

\bibitem{kostina2012fixed}
V.~Kostina and S.~Verd{\'u}, ``Fixed-length lossy compression in the finite
  blocklength regime,'' {\em IEEE Trans.\ Inf.\ Theory}, vol.~58, no.~6,
  pp.~3309--3338, June 2012.

\bibitem{el2011network}
A.~El~Gamal and Y.-H. Kim, {\em Network Information Theory}.
\newblock Cambridge University Press, 2011.

\bibitem{lccv2015}
J.~Liu, T.~A. Courtade, P.~Cuff, and S.~Verd\'u, ``Information-theoretic
  perspectives on brascamp-lieb inequality and its reverse,'' {\em
  arXiv:1702.06260v3}, Dec. 2017.

\bibitem{watanabe2017}
S.~Watanabe, ``Second-order region for gray¨cwyner network,'' {\em IEEE Trans.\
  Inf.\ Theory}, vol.~63, no.~2, pp.~1006--1018, Feb.~2017.

\bibitem{gao2003}
F.~Gao and J.~Quastel, ``Exponential decay of entropy in the random
  transposition and {Bernoulli-Laplace} models,'' {\em The Annals of Applied
  Probability}, vol.~13, no.~4, pp.~1591--1600, 2003.

\bibitem{ISIT_lccv_2016}
J.~Liu, T.~A. Courtade, P.~Cuff, and S.~Verd\'{u}, ``{Brascamp-Lieb} inequality
  and its reverse: An information theoretic view,'' in {\em Proc.~of IEEE
  International Symposium on Information Theory}, pp.~1048--1052, July 2016,
  Barcelona, Spain.

\bibitem{mossel2013reverse}
E.~Mossel, K.~Oleszkiewicz, and A.~Sen, ``On reverse hypercontractivity,'' {\em
  Geometric and Functional Analysis}, vol.~23, no.~3, pp.~1062--1097, 2013.

\bibitem{liu_thesis}
J.~Liu, ``{Information Theory from A Functional Viewpoint},'' {\em Ph.D.
  thesis, Princeton University, Princeton, NJ}, 2018.

\end{thebibliography}

\end{document}